\definecolor{labelkey}{rgb}{0,.56,.7}
\newcommand{\seqnum}[1]{\href{http://oeis.org/#1}{\underline{#1}}}
\def\df{\overset{\mathrm{df}}{=}}
\newcommand{\nn}{\nonumber}
\newcommand*{\at}{@}
\def\a{\alpha}
\def\D{\Delta}
\def\bbZ{\mathbb{Z}}
\def\bbR{\mathbb{R}}
\def\msE{\mathsf{E}}
\def\msD{\mathsf{D}}
\begin{document}

\theoremstyle{plain}
\newtheorem{theorem}{Theorem}
\newtheorem{corollary}[theorem]{Corollary}
\newtheorem{lemma}[theorem]{Lemma}
\newtheorem{proposition}[theorem]{Proposition}

\theoremstyle{definition}
\newtheorem{definition}{Definition}
\newtheorem{example}[theorem]{Example}
\newtheorem{conjecture}[theorem]{Conjecture}

\theoremstyle{remark}
\newtheorem*{remark}{\bf Remark}

\title{On the number of nonnegative solutions of a system of linear Diophantine equations}

\begin{abstract}
We derive a closed expression for the number of nonnegative solutions of a certain system of linear Diophantine equations. The motivation comes from high energy physics where the nonnegative solutions play a crucial role in the perturbative calculation for a class of Lagrangians describing the interaction of an atom with a boson field or a non-linear interaction of boson fields among themselves (the so-called interacting $\phi^n$ models). The linear system can be solved and the nonnegative solutions enumerated but a closed expression for the number of solutions is preferable to counting the solutions. Interestingly, the problem led to a construction of a simpler linear Diophantine system whose nonnegative number of solutions turns out to be the magic constant.
\end{abstract}

\author{Kamil Br\'adler}

\email{kbradler\at uottawa.ca}

\address{Department of Mathematics and Statistics, University of Ottawa, Ottawa, Canada}

\keywords{Diophantine equation, Floyd's triangle, magic square}

\subjclass[2010]{Primary 11D45; Secondary 70S05, 81T18, 11D04.}

\thanks{The author thanks Damien Roy for discussions, Robin Chapman for pointing to the method of partition analysis and Maurice Rojas for a comment.}

\maketitle

\thispagestyle{empty}

\allowdisplaybreaks

\section{Introduction}\label{sec:intro}

Linear Diophantine equations and their systems are easy to solve. There are three possibilities: either a system has no solution, one solution or infinitely many solutions. The case of one solution can be thought of as a special case of infinitely many solutions. The method to distinguish the  particular cases is known~\cite[ch.~1]{nathanson2008elementary} and the issue can often be decided by inspection (by a heuristic search for at least one solution). This can be contrasted with the case of a general Diophantine equation, or its system, where the decision whether a solution exists belongs to hard problems. Focusing on the linear case from now on, if a system has infinitely many solutions it may be of an interest to investigate the total number of nonnegative solutions. The answer is necessarily a finite number. One such a system appeared in the author's recent work~\cite{bradler2016unitary}
\begin{subequations}\label{eq:Diophantine}
  \begin{align}
    2\a_{11}+\a_{12}+\a_{13}+\a_{14} & = \ell_1,\label{eq:Diophantine1} \\
    \a_{12}+2\a_{22}+\a_{23}+\a_{24} & = \ell_2,\label{eq:Diophantine2} \\
    \a_{13}+\a_{23}+2\a_{33}+\a_{34} & = \ell_3,\label{eq:Diophantine3} \\
    \a_{14}+\a_{24}+\a_{34}+2\a_{44} & = \ell_4,\label{eq:Diophantine4}
  \end{align}
\end{subequations}
where $\ell_i,\a_{ij}\in\bbZ_{\geq0}$ such that $\sum_i\ell_i$ is even. Its importance comes from the fact that it is closely related to counting the number of Feynman diagrams for a wide class of boson models in interacting quantum field theory. The linear equations in~(\ref{eq:Diophantine}) are one of those cases where for a given $\ell_i$ one can quickly find a solution and conclude that the number of solutions for $\a_{ij}$ is infinite. It is only slightly less obvious to see whether the system itself has zero or infinitely many solutions  (any of the four equations in~\eqref{eq:Diophantine} imposes a constraint on a solution for the remaining three equations).

The closed expression for the total number of \emph{nonnegative} solutions proved to be a pertinacious problem to pursue and we present its derivation. We simplify the system by considering $\ell_i=\ell$ (both even and odd) but, as will become clear, our counting (proof) strategy can be used to count the solutions for different $\ell_i$ if there is a need for it. Also, by setting $\a_{i4}=\ell_4=0,\forall i$ the number of nonnegative solutions of the resulting linear system is interesting on its own and turns out to be related to Floyd's triangle~\seqnum{A006003} and the row, column and diagonal sum of the normal magic square of order $\ell$ (called the magic constant). The problems related to linear Diophantine equations and their systems often appear in the theory of integer programming~\cite{schrijver1998theory}, lattice counting~\cite{beck2007computing} and combinatorics~\cite{stanley1997enumerative}. Typically, one is interested in finding the solutions of linear equations rather than counting them. As argued in~\cite{bradler2016unitary}, that is not a problem here. System~\eqref{eq:Diophantine} is simple enough so that all nonnegative solutions can be systematically listed. Alternatively, one can easily cast the system into the Smith normal form~\cite{smith1861systems} and get a generic expression for all solutions. But the Smith form does not seem to provide an easy way of counting the solutions.

There exist several algorithms for lattice point counting which can be used to obtain the same result we got here. For a single-variable problem ($\ell_i=\ell_j$) one only needs to know the polynomial order and the first few solutions to find the polynomial for any $\ell$ by using, for example, the Lagrange method. For multivariate problems, such as the original system~\eqref{eq:Diophantine}, one can use Barvinok's algorithm~\cite{barvinok1999algorithmic} or the approach by MacMahon called partition analysis~\cite{macmahon1916combinatory} originally developed for other purposes. These could be called `black box' methods\footnote{There exist SW packages such as~\href{http://www.math.ucdavis.edu/~latte}{LattE} or~\href{http://www.risc.jku.at/research/combinat/software/ergosum/RISC/Omega.html}{Omega} finding the number of solutions instantaneously.} and are not the methods used here. The author's hope is that for the physically relevant problem of many variables $\ell_i\neq\ell_j$ (and for a larger, but similar, system we briefly discuss in Section~\ref{sec:open}) we will be able to use the symmetries of~\eqref{eq:Diophantine} as well as a certain invariant which were instrumental in finding the number of solutions here.

\section{Main result}\label{sec:main}

\begin{theorem}\label{thm:main}
  The number of nonnegative solutions of system~\eqref{eq:Diophantine} is given by
  \begin{equation}\label{eq:ellEven}
    \mathsf{e}(\ell)=\frac{1}{576} (\ell+2) (\ell+4)\big(\ell (\ell+5) (\ell (\ell+4)+12)+72\big)
  \end{equation}
  for $\a_{ij}\in\bbZ_{\geq0}$ and $\ell=\ell_i$ even and
  \begin{equation}\label{eq:ellOdd}
    \mathsf{d}(\ell)=\frac{1}{576} (\ell+1) (\ell+3) \big(\ell (\ell+5) (\ell (\ell+6)+17)+72\big)
  \end{equation}
  for $\ell=\ell_i$ odd.
\end{theorem}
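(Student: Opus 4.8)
The plan is to strip the ten-variable system down to a lattice-point count over the six \emph{off-diagonal} variables, exploit the permutation symmetry of~\eqref{eq:Diophantine}, and then enumerate by stratifying along a global invariant.

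First I would eliminate the four diagonal variables. In the $i$-th row of~\eqref{eq:Diophantine} the variable $\a_{ii}$ carries the coefficient $2$, so once the six entries $\a_{12},\a_{13},\a_{14},\a_{23},\a_{24},\a_{34}$ are fixed, each $\a_{ii}$ is forced by $2\a_{ii}=\ell-s_i$, where $s_i$ denotes the sum of the off-diagonal entries in row $i$; this $\a_{ii}$ is a nonnegative integer precisely when $s_i\le\ell$ and $s_i\equiv\ell\pmod 2$. Hence $\mathsf e(\ell)$ and $\mathsf d(\ell)$ count the $(\a_{ij})_{i<j}\in\bbZ_{\geq0}^{6}$ in the polytope $\{s_i\le\ell\}$ subject to the four parity congruences $s_i\equiv\ell$. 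Reading the six variables as edge multiplicities of $K_4$ and the $\a_{ii}$ as loops, this is exactly the number of multigraphs on four labelled vertices (loops allowed, each loop contributing $2$ to a degree) with all degrees equal to $\ell$; the group $S_4$ permuting the vertices acts on the configurations, and I would use this action to reduce the number of cases in the summation.

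The engine of the count is the invariant obtained by summing all four equations, $\sum_i\a_{ii}+\sum_{i<j}\a_{ij}=2\ell$: the total number of loops-plus-edges is fixed. I would therefore stratify by the diagonal (loop) data and write
\begin{equation*}
  \mathsf e(\ell)\ \text{or}\ \mathsf d(\ell)=\sum_{j_1,j_2,j_3,j_4\geq0}G(\ell-2j_1,\ell-2j_2,\ell-2j_3,\ell-2j_4),
\end{equation*}
where $G(d_1,d_2,d_3,d_4)$ is the number of \emph{loopless} multigraphs on $K_4$ with prescribed degrees $d_i$. This inner quantity is itself a lattice-point count over a two-dimensional polygon: eliminating four of the six edge variables leaves two free ones constrained by four inequalities whose bounds are affine (with $\min/\max$ and half-integer corners) in the $d_i$, so $G$ is an explicit degree-two quasi-polynomial, and its closed form is the key lemma I would isolate.

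The last step is the fourfold summation over $j_1,\dots,j_4$, which lifts the degree from two to six, matching~\eqref{eq:ellEven}--\eqref{eq:ellOdd}. The parity congruences $\ell-2j_i\equiv\ell$ hold automatically, but the realizability of $G$ forces $\min/\max$ conditions on the $d_i$, so the summation splits into regions and the residue of $\ell$ modulo $2$ must be tracked throughout --- which is precisely why two formulas appear. I expect this bookkeeping, namely correctly resolving the $\min/\max$ bounds while separating the even and odd cases, to be the main obstacle: the algebra is routine but error-prone. To make the conclusion airtight I would run the computation against an Ehrhart argument. The full solution set is the $\ell$-dilate of the rational polytope $\{x\geq0:Ax=\mathbf{1}\}$, whose dimension is $10-\mathrm{rank}(A)=6$ (the diagonal block $2I_4$ makes $\mathrm{rank}(A)=4$); thus $\mathsf e$ and $\mathsf d$ are quasi-polynomials of degree $6$ with period dividing $2$ (the vertex coordinates have denominators at most $2$, owing to the coefficient $2$ on the diagonal). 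It then suffices to check that~\eqref{eq:ellEven} and~\eqref{eq:ellOdd} agree with direct enumeration at seven values of $\ell$ in each parity class, which determines the two degree-six polynomials uniquely.
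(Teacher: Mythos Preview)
Your plan is correct and its summation core is close to the paper's own argument: both eliminate the diagonals via $\ell_{ii}=\ell-2\a_{ii}$ and then sum the number of solutions of the reduced off-diagonal system~\eqref{eq:DiophantineSim} over the admissible $\ell_{ii}$. The paper organises that sum by first fixing $(\ell_{11},\ell_{22})$, introducing the invariant $\D=-\ell_{11}-\ell_{22}+\ell_{33}+\ell_{44}$, and counting weighted lattice points in the $(\ell_{33},\ell_{44})$-plane with the help of reciprocal, vertical and horizontal hyperplanes, exploiting only the transpositions $1\leftrightharpoons3,\,2\leftrightharpoons4$ and $1\leftrightharpoons2$; you would instead fix all four $\ell_{ii}$ at once, extract your $G$ as a closed-form degree-two quasi-polynomial, and then perform the fourfold sum under the full $S_4$ action. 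These are two orderings of the same nested summation, and your $G(\ell_{11},\ell_{22},\ell_{33},\ell_{44})$ is precisely the per-vertex multiplicity that the paper assembles piecewise in Lemmas~\ref{lem:countingEVENell} and~\ref{lem:countingODDell}. What is genuinely different is your Ehrhart step: the paper never invokes quasi-polynomiality, so its lengthy eight-case analysis (further split by $4\mid\ell$ versus $4\mid\ell-2$, etc.) has to carry all the weight. Your observation that the solution set is the $\ell$-dilate of a six-dimensional rational polytope with half-integral vertices, hence that the count is a period-$2$ quasi-polynomial of degree~$6$ determined by seven values in each residue class, is an independent and much shorter proof; you could in fact discard the summation entirely and keep only this. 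The trade-off is that the paper's approach, though heavier, exposes the structure (the invariant $\D$ and the strip geometry) that it hopes to reuse for the multivariate problem $\ell_i\neq\ell_j$ and for the general system~\eqref{eq:DiphantGeneral}, whereas the interpolation argument is specific to the single-parameter case $\ell_i=\ell$.
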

\begin{remark}
  By remapping $\ell\mapsto2\ell-1$ in~\eqref{eq:ellEven} and $\ell\mapsto2\ell-3$ in~\eqref{eq:ellOdd} we get
    \begin{equation}\label{eq:ellEvenRemap}
    \mathsf{\tilde e}(\ell)={1\over18} \ell (\ell+1) (3+2 \ell+\ell^2+\ell^3+2 \ell^4)
  \end{equation}
  for $\ell\geq1$ and
  \begin{equation}\label{eq:ellOddRemap}
    \mathsf{\tilde d}(\ell)={1\over18} \ell(\ell-1)(3-2 \ell+\ell^2-\ell^3+2 \ell^4)
  \end{equation}
  for $\ell\geq2$, showing a certain similarity.
\end{remark}
Let us recall the definition of the square lattice and all other useful concepts we will use here\footnote{The symbol $\df$ used here stands for `define'.}.
\begin{definition}\label{def:lattice}
    The \emph{square lattice} is the set $\bbZ^2\df\{(k,l);k,l\in\bbZ\}$ as a subset of $\bbR^2$. The nonnegative even quadrant is the set $\bbZ_{\mathrm e}^2\df\{(k,l);k=2m,l=2n;m,n\in\bbZ_{\geq0}\}$ and the positive odd quadrant is defined as $\bbZ_{\mathrm d}^2\df\{(k,l);k=2m+1,l=2n+1;m,n\in\bbZ_{\geq0}\}$. The elements of $\bbZ_{\mathrm e}^2$ or $\bbZ_{\mathrm d}^2$ are referred to as \emph{vertices} and the path connecting two neighboring vertices is called a \emph{segment}. An affine hyperplane is defined as $H\df\{x_i\in\bbR^2;ax_1+bx_2-c=0\}$ where $a,b,c\in\bbR$. A hyperplane is called \emph{reciprocal, horizontal} and \emph{vertical} by setting $c\in\bbZ_{\geq0}$ and $(a,b)=(1,1)$, $(a,b)=(0,1)$ and $(a,b)=(1,0)$ (in this order).
\end{definition}
\begin{remark}
  The length of any segment in the even and odd square lattice is two. This is the reason for a frequent occurrence of the factor of one half in the upcoming lemmas where we count the number of vertices.
\end{remark}
\begin{lemma}\label{lem:DioSimSystem}
  Considering  $\ell_{i}=\ell,\forall i$ in~\eqref{eq:Diophantine}, let $\ell_{ii}=\ell-2\a_{ii}\geq0$ and $\D\df-\ell_{11}-\ell_{22}+\ell_{33}+\ell_{44}$. Then, for  $\ell_{11}\leq\ell_{22}$ and  $\D\leq0$, there exists a nonnegative solution  for any $\ell_{33}$ and $\ell_{44}$ satisfying
  \begin{equation}\label{eq:DeltaBound}
    -\ell_{11}+\ell_{22}\leq\ell_{33}+\ell_{44}.
  \end{equation}
  Furthermore, $\D$ classifies all nonnegative solutions  according to whether $\D\lessgtr0$ or $\D=0$ and the number of nonnegative solutions for $\D>0$ equals the number od solutions for $\D<0$. Finally,  any  pair $(\a_{12},\a_{34})$ consistent with $\ell_{ii}$ satisfying~\eqref{eq:DeltaBound} determines the total number of nonnegative  solutions $(\a_{ij})_{1\leq i<j\leq4}$ calculated from the following expression:
  \begin{equation}\label{eq:UpperBounda34}
    \min{[\ell_{11}+\D/2,\min{[\ell_{33},\ell_{44}]}]}.
  \end{equation}
\end{lemma}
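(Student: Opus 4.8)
The plan is to rewrite the system in the degree variables $\ell_{ii}=\ell-2\alpha_{ii}$ and read the six remaining unknowns $\alpha_{12},\alpha_{13},\alpha_{14},\alpha_{23},\alpha_{24},\alpha_{34}$ as nonnegative edge weights on the complete graph $K_4$ with vertices $1,2,3,4$: the $i$-th equation of \eqref{eq:Diophantine} becomes the requirement that the weights of the three edges meeting vertex $i$ sum to $\ell_{ii}$. Adding the first two equations and subtracting the last two, every variable except $\alpha_{12}$ and $\alpha_{34}$ cancels and one obtains the invariant $\alpha_{34}-\alpha_{12}=\Delta/2$ (with two analogous identities for the other perfect matchings $\{13,24\}$ and $\{14,23\}$). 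This is the structural role of $\Delta$: the $K_4$ incidence relations have rank four, so the solution set is a two-parameter family, and the invariant ties $\alpha_{12}$ to $\alpha_{34}$, so that fixing a consistent pair $(\alpha_{12},\alpha_{34})$ leaves exactly one free weight to describe each solution. First I would record these identities, since they drive all three claims.

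For the classification and the symmetry I would exhibit the involution that relabels the index pairs $\{1,2\}\leftrightarrow\{3,4\}$, say $1\leftrightarrow3$, $2\leftrightarrow4$. It permutes the edge weights by $\alpha_{12}\leftrightarrow\alpha_{34}$ and $\alpha_{14}\leftrightarrow\alpha_{23}$ while fixing $\alpha_{13}$ and $\alpha_{24}$, hence carries nonnegative solutions to nonnegative solutions, sends the degree vector $(\ell_{11},\ell_{22},\ell_{33},\ell_{44})$ to $(\ell_{33},\ell_{44},\ell_{11},\ell_{22})$, and therefore $\Delta\mapsto-\Delta$. As the diagonal entries $\alpha_{ii}$ range over their admissible values, $\Delta$ is constant for each choice, so it partitions the configurations into the classes $\Delta\lessgtr0$ and $\Delta=0$; the involution is a weight-preserving bijection between the $\Delta>0$ and $\Delta<0$ classes and fixes the $\Delta=0$ class, which is precisely the stated symmetry.

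For existence and the bound \eqref{eq:DeltaBound} I would use the standing hypotheses $\ell_{11}\le\ell_{22}$ and $\Delta\le0$: the invariant forces $\alpha_{34}=\alpha_{12}+\Delta/2$, so the extreme admissible choice is $\alpha_{34}=0$, $\alpha_{12}=-\Delta/2\ge0$. Substituting this into the vertex-$1$, vertex-$3$ and vertex-$4$ relations expresses $\alpha_{14},\alpha_{23},\alpha_{24}$ affinely in the single free weight $\alpha_{13}$ (the vertex-$2$ relation is then automatic), and the four nonnegativity constraints collapse to $\max\{0,\ell_{11}+\Delta/2-\ell_{44}\}\le\alpha_{13}\le\min\{\ell_{11}+\Delta/2,\ell_{33}\}$. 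With $\ell_{11}\le\ell_{22}$ the upper-versus-$\ell_{44}$ and upper-versus-$0$ endpoint inequalities hold automatically, so this interval is nonempty precisely when $\ell_{11}+\Delta/2\ge0$; a short rearrangement gives $\ell_{11}+\Delta/2\ge0\iff-\ell_{11}+\ell_{22}\le\ell_{33}+\ell_{44}$, which is \eqref{eq:DeltaBound}. This simultaneously proves existence and identifies \eqref{eq:DeltaBound} as the sharp feasibility condition, namely the non-domination (multigraph realizability) condition for the vertex of largest degree.

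Finally, the expression \eqref{eq:UpperBounda34} arises as the sharp upper bound on $\alpha_{34}$: the constraints at vertices $3$ and $4$ give $\alpha_{34}\le\ell_{33}$ and $\alpha_{34}\le\ell_{44}$, while $\alpha_{12}\le\ell_{11}$ together with the invariant gives $\alpha_{34}\le\ell_{11}+\Delta/2$, whence $\alpha_{34}\le\min\{\ell_{11}+\Delta/2,\min\{\ell_{33},\ell_{44}\}\}$, and each admissible value is attained by the argument of the previous paragraph applied with the roles shifted. Translating the reduced affine system into the language of Definition \ref{def:lattice}, the feasible set for a fixed pair $(\alpha_{12},\alpha_{34})$ is the integer set of a segment cut out by a reciprocal, a horizontal and a vertical hyperplane, and the number of nonnegative solutions is read off by counting these points, with the segment-versus-vertex convention noted after Definition \ref{def:lattice}. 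The main obstacle I anticipate is purely organizational: keeping the nested $\min/\max$ endpoints straight across the subcases ($\ell_{11}+\Delta/2$ versus $\ell_{33}$ versus $\ell_{44}$, and whether the lower endpoint is active) so that the interval length consolidates into the single closed form \eqref{eq:UpperBounda34}.
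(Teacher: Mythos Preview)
Your proposal is correct and follows essentially the same route as the paper: both rewrite the system in the variables $\ell_{ii}$, derive the invariant $\Delta=2\alpha_{34}-2\alpha_{12}$ by adding and subtracting the four equations, use the relabeling $1\leftrightharpoons3$, $2\leftrightharpoons4$ to obtain the $\Delta\mapsto-\Delta$ symmetry, and read off \eqref{eq:DeltaBound} and \eqref{eq:UpperBounda34} from the nonnegativity constraints on $\alpha_{12}$ and $\alpha_{34}$. Your $K_4$ edge-weight framing and your explicit parametrization of the fibre over $(\alpha_{12},\alpha_{34})$ by the single free variable $\alpha_{13}$ are slightly more systematic than the paper's presentation (which argues more informally that the worst case $c=0$ still admits $\alpha_{12}=\ell_{11}$), but the underlying argument is the same; one small caution is that your appeal to Definition~\ref{def:lattice} in the last paragraph is premature, since the lattice picture is only introduced for the subsequent counting lemmas and is not needed for this lemma itself.
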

\begin{proof}
  We rewrite~\eqref{eq:Diophantine} as
  \begin{subequations}\label{eq:DiophantineSim}
      \begin{align}
        \a_{12}+\a_{13}+\a_{14} & = \ell_{11},\label{eq:Diophantine1Sim} \\
        \a_{12}+\a_{23}+\a_{24} & = \ell_{22},\label{eq:Diophantine2Sim} \\
        \a_{13}+\a_{23}+\a_{34} & = \ell_{33},\label{eq:Diophantine3Sim} \\
        \a_{14}+\a_{24}+\a_{34} & = \ell_{44}.\label{eq:Diophantine4Sim}
      \end{align}
    \end{subequations}
  and add~\eqref{eq:Diophantine1Sim} and~\eqref{eq:Diophantine2Sim} followed by subtraction from the sum of the last two lines of~\eqref{eq:DiophantineSim}. We get
  \begin{equation}\label{eq:DeltaDef}
    \D=-\ell_{11}-\ell_{22}+\ell_{33}+\ell_{44}=2\a_{34}-2\a_{12}.
  \end{equation}
  We are looking for nonnegative solutions and so the lower bound $\a_{34}\geq0$ holds. Then, from~\eqref{eq:DeltaDef} we get  $\min{[\ell_{11},\ell_{22}]}+\D/2=\ell_{11}+\D/2\geq0$ that becomes~\eqref{eq:DeltaBound}. We also see that a solution of~\eqref{eq:DeltaDef} exists for any $\D\leq0$ by rewriting~\eqref{eq:DeltaBound} as $ -\ell_{11}+\ell_{22}+c=\ell_{33}+\ell_{44}$ for $c\in\bbZ_{\geq0}$, inserting the RHS to the middle expression in~\eqref{eq:DeltaDef} and setting $\a_{34}=0$. We get $-2\ell_{11}+c=-2\a_{12}$. Since $0\leq\a_{12}\leq\min{[\ell_{11},\ell_{22}]}=\ell_{11}$ the worst-case scenario is $c=0$ and even in that case the equation can be satisfied by setting $\a_{12}=\ell_{11}$. We also have an upper bound  $\a_{34}\leq\min{[\ell_{33},\ell_{44}]}$ but there is no guarantee that $\a_{34}$ can take on all the values. From~\eqref{eq:DeltaDef} it follows that any such value must be `matched' by $\min{[\ell_{11},\ell_{22}]}+{\D/2}$. Hence,  we choose from the two competing quantities $\min{[\min{[\ell_{11},\ell_{22}]}+{\D/2},\min{[\ell_{33},\ell_{44}]}]}$ and considering  $\ell_{11}\leq\ell_{22}$ we arrive at $\a_{34}$~to be upper bounded by~\eqref{eq:UpperBounda34}.  When is~expression~\eqref{eq:UpperBounda34} minimized by the first argument? The question is when $\ell_{11}+\D/2<\min{[\ell_{33},\ell_{44}]}$ holds. Taking into account both possibilities, $\ell_{33}\leq\ell_{44}$ and $\ell_{33}>\ell_{44}$, we obtain the inequality
  \begin{equation}\label{eq:Strip}
    \ell_{22}-\ell_{11}>|\ell_{33}-\ell_{44}|.
  \end{equation}
  Eq.~\eqref{eq:Strip} contains an important piece of information. By searching for all nonnegative solutions we are after all possible nonnegative six-tuples $(\a_{ij})_{1\leq i<j\leq4}$. Naturally, many of them contain the same pair $(\a_{12},\a_{34})$ and so we have to find the pairs' multiplicities to count all the solutions. Due to Eq.~\eqref{eq:DeltaDef} the multiplicity of $\a_{12}$ equals the multiplicity of~$\a_{34}$ which is determined by the value of $\a_{34}$ itself. Eq.~\eqref{eq:UpperBounda34} provides the greatest value $\a_{34}$ can achieve and~\eqref{eq:Strip} tells us where the two possibilities happen. But $\a_{34}$ is not the multiplicity itself. For an admissible $\a_{34}$ there is $\ell_{33}-\a_{34}+1$ of ways $\a_{13}+\a_{23}$ sums to $\ell_{33}-\a_{34}$ in~\eqref{eq:Diophantine3Sim} or equivalently in~\eqref{eq:Diophantine4Sim}. Summing over all allowed $\a_{34}$ we find the multiplicity factor to be a triangle number~\seqnum{A000217}.

  So far we considered separately $\D=0$ and $\D<0$. The latter is equivalent to
  \begin{equation}\label{eq:DeltaNeg}
    \ell_{11}+\ell_{22}>\ell_{33}+\ell_{44}
  \end{equation}
  and we can indeed afford to consider only these two cases. This is because linear system~\eqref{eq:DiophantineSim} is invariant w.r.t. the relabeling $1\leftrightharpoons3$ and $2\leftrightharpoons4$ and the permutation flips the sign of $\D$.
\end{proof}
\begin{remark}
   We will find the explicit expressions for the number of solutions for $\D\leq0$  in  Lemma~\ref{lem:countingEVENell} and~\ref{lem:countingODDell}. It is convenient to depict the found inequalities in a nonnegative quadrant of a square lattice whose segment has length two as introduced in Definition~\ref{def:lattice}. The quadrant's axes are identified with $\ell_{33}$ and $\ell_{44}$ and inequalities~\eqref{eq:DeltaNeg} and~\eqref{eq:DeltaBound} together with the upper bound on $\ell_{33}$ and $\ell_{44}$ demarcate a polygon whose boundary and interior contain all admissible pairs $(\ell_{33},\ell_{44})$ leading to the solutions of~\eqref{eq:DiophantineSim}. An area given by inequalities~\eqref{eq:DeltaBound},\eqref{eq:Strip} and~\eqref{eq:DeltaNeg} will be referred to as a diagonal \emph{strip} and it further splits the polygon into several regions. Different rules for calculating the multiplicities hold in different parts of the polygon and a special care will be taken for the degenerate strip when $\ell_{11}=\ell_{22}$.
\end{remark}
\begin{lemma}\label{lem:countingEVENell}
    Given the assumptions of Lemma~\ref{lem:DioSimSystem} and for $\ell$ even the number of nonnegative solutions of~Eqs.~\eqref{eq:DiophantineSim} is the sum of the following expressions:
    \begin{subequations}\label{eq:NoOfSolsEven}
    \begin{align}
      \mathsf{E}^{1,\D_-}_{<} & = \sum_{t=1}^{\ell_{11}/2}{2t(2t-1)\over2}\bigg({\ell_{22}-\ell_{11}\over2}-1\bigg)+
      \sum_{t=1}^{\ell_{11}/2}{(2t+1)2t\over2}{\ell_{22}-\ell_{11}\over2}\nn\\
      &\quad+2\sum_{t=1}^{\ell_{11}/2}{2t(2t-1)\over2}(\ell_{11}-(2t-2)),      \label{eq:NoOfSolsA}\\
      \msE^{1,\D_-}_{=} & = -\sum_{t=1}^{\ell_{11}/2}{2t(2t-1)\over2}+2\sum_{t=1}^{\ell_{11}/2}{2t(2t-1)\over2}(\ell_{11}-(2t-2)),\label{eq:NoOfSolsB}\\
      \msE^{2,\D_-}_{<} & = \sum_{t=1}^{\ell_{11}/2}{2t(2t-1)\over2}\bigg({\ell_{22}-\ell_{11}\over2}-1\bigg)+
      \sum_{t=1}^{\ell_{11}/2}{(2t+1)2t\over2}{\ell_{22}-\ell_{11}\over2}\nn\\
      &\quad+2\sum_{t=1}^{\ell_{11}/2}{2t(2t-1)\over2}(\ell_{11}-(2t-2))
          -2\sum_{t=1}^{(\ell_{11}+\ell_{22}-\ell)/2-1}{2t(2t-1)\over2}\bigg({\ell_{11}+\ell_{22}-\ell\over2}-t\bigg),\label{eq:NoOfSolsC}\\
      \msE^{2,\D_-}_{=} & = -\sum_{t=1}^{\ell_{11}/2}{2t(2t-1)\over2}\nn\\
          &\quad+2\sum_{t=1}^{\ell_{11}/2}{2t(2t-1)\over2}(\ell_{11}-(2t-2))
          -2\sum_{t=1}^{(2\ell_{11}-\ell)/2-1}{2t(2t-1)\over2}\bigg({2\ell_{11}-\ell\over2}-t\bigg),\label{eq:NoOfSolsD}\\
      \msE^{1,\D_0}_{<} & = {1\over2}(\ell_{11}+1)(\ell_{11}+2)\big((\ell_{22}-\ell_{11})/2-1\big)+2\sum_{t=1}^{\ell_{11}/2+1}{2t(2t-1)\over2},\label{eq:NoOfSolsE}\\
      \msE^{1,\D_0}_{=} & = -{1\over2}(\ell_{11}+1)(\ell_{11}+2)+2\sum_{t=1}^{\ell_{11}/2+1}{2t(2t-1)\over2},\label{eq:NoOfSolsF}\\
      \msE^{2,\D_0}_{<} & ={1\over2}(\ell_{11}+1)(\ell_{11}+2)\big((\ell_{22}-\ell_{11})/2-1\big)\nn\\
           &\quad+2\sum_{t=1}^{(\ell-\ell_{22})/2+1}{1\over2}(\ell_{11}+\ell_{22}-\ell+2t-1)(\ell_{11}+\ell_{22}-\ell+2t), \label{eq:NoOfSolsG}\\
      \msE^{2,\D_0}_{=} & = -{1\over2}(\ell_{11}+1)(\ell_{11}+2)
    +2\sum_{t=1}^{(\ell-\ell_{11})/2+1}{1\over2}(2\ell_{11}-\ell+2t-1)(2\ell_{11}-\ell+2t),\label{eq:NoOfSolsH}
    \end{align}
    \end{subequations}
    where $\D_-,\D_0$ denote $\D<0$ and $\D=0$, respectively, and the subscripts $<$ and $=$ distinguish between   $\ell_{11}<\ell_{22}$ and $\ell_{11}=\ell_{22}$.
\end{lemma}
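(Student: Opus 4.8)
The plan is to fix the pair $(\ell_{11},\ell_{22})$ with $\ell_{11}\le\ell_{22}$, compute the contribution of each admissible vertex $(\ell_{33},\ell_{44})$ separately, and then sum over the even lattice. First I would eliminate $\a_{12}$ through \eqref{eq:DeltaDef}, writing $\a_{12}=\a_{34}-\D/2$, and note that once $(\a_{12},\a_{34})$ is fixed the remaining four unknowns form a $2\times2$ nonnegative integer array with row sums $\ell_{11}-\a_{12},\ell_{22}-\a_{12}$ and column sums $\ell_{33}-\a_{34},\ell_{44}-\a_{34}$; the number of its completions is $\min[\ell_{11}-\a_{12},\ell_{22}-\a_{12},\ell_{33}-\a_{34},\ell_{44}-\a_{34}]+1$. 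Each of the four arguments is a constant minus $\a_{34}$, so the minimum equals $U-\a_{34}$, where $U=\min[(\ell_{11}-\ell_{22}+\ell_{33}+\ell_{44})/2,\ell_{33},\ell_{44}]$ is exactly the bound \eqref{eq:UpperBounda34} (here $\ell_{11}\le\ell_{22}$ discards the $\ell_{22}$ term). Summing over $\a_{34}=0,\dots,U$ then collapses to the single triangle number $\tfrac12(U+1)(U+2)$, which is the clean form of the multiplicity promised in Lemma~\ref{lem:DioSimSystem}; the whole count becomes $\sum\tfrac12(U+1)(U+2)$ over admissible vertices.

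Next I would partition the admissible polygon according to which argument wins the minimum. The diagonal band \eqref{eq:Strip} is where the first argument is smallest, giving $U=(\ell_{11}-\ell_{22}+\ell_{33}+\ell_{44})/2$, constant along each anti-diagonal $\ell_{33}+\ell_{44}=\mathrm{const}$ and increasing by one per step; the two triangular corners are where $\ell_{33}$ or $\ell_{44}$ wins, giving $U=\min[\ell_{33},\ell_{44}]$, constant along columns. Summing $\tfrac12(U+1)(U+2)$ over the strip as (width)$\times$(triangle number) per anti-diagonal, then over anti-diagonals indexed by $t$, produces the first two terms of \eqref{eq:NoOfSolsA}: the split into $2t(2t-1)/2$ and $(2t+1)2t/2$ comes from the parity of $U$, and the widths $(\ell_{22}-\ell_{11})/2$ and $(\ell_{22}-\ell_{11})/2-1$ from the number of even lattice points the strip cuts on consecutive anti-diagonals. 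Summing over the corner columns ($\ell_{33}=2t-2$, $U=\ell_{33}$) yields the third term of \eqref{eq:NoOfSolsA}. The superscript $i=1$ versus $i=2$ records whether $\ell_{11}+\ell_{22}\le\ell$ or $\ell_{11}+\ell_{22}>\ell$: in the latter case the box constraint $\ell_{33},\ell_{44}\le\ell$ clips the top of the polygon, truncating the sums and shifting the triangle index by $\ell_{11}+\ell_{22}-\ell$, which is the source of the subtracted corrections in \eqref{eq:NoOfSolsC}, \eqref{eq:NoOfSolsD}, \eqref{eq:NoOfSolsG} and \eqref{eq:NoOfSolsH}.

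The line $\D=0$, i.e.\ $\ell_{33}+\ell_{44}=\ell_{11}+\ell_{22}$, is treated on its own because there $U$ attains its maximum $\ell_{11}$, so the multiplicity is the constant $\tfrac12(\ell_{11}+1)(\ell_{11}+2)$ on the $(\ell_{22}-\ell_{11})/2-1$ points interior to the strip and reverts to the corner value at the two ends, giving \eqref{eq:NoOfSolsE} and \eqref{eq:NoOfSolsG}. Throughout, the reflection $\ell_{33}\leftrightarrow\ell_{44}$ (the symmetry $3\leftrightarrow4$ of \eqref{eq:DiophantineSim}) interchanges the two corners, so I would compute one and double it — this is the factor $2$ in the corner sums — and then subtract one copy of the fixed diagonal $\ell_{33}=\ell_{44}$, which the doubling has counted twice; this is precisely the origin of the negative terms $-\sum 2t(2t-1)/2$ in \eqref{eq:NoOfSolsB}, \eqref{eq:NoOfSolsD} and of the single $-\tfrac12(\ell_{11}+1)(\ell_{11}+2)$ in \eqref{eq:NoOfSolsF}, \eqref{eq:NoOfSolsH}. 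Finally, the degenerate strip $\ell_{11}=\ell_{22}$ (subscript $=$) has width zero, so the band terms proportional to $(\ell_{22}-\ell_{11})/2$ disappear and the polygon is covered entirely by the two corners meeting along $\ell_{33}=\ell_{44}$; this is why \eqref{eq:NoOfSolsB}, \eqref{eq:NoOfSolsD}, \eqref{eq:NoOfSolsF}, \eqref{eq:NoOfSolsH} retain only corner and diagonal pieces.

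I expect the genuine difficulty to be the boundary and parity bookkeeping rather than the algebra. All four arguments in the minimum dropping at unit rate is what makes the per-vertex multiplicity uniform, but pinning down the exact range of $t$, the alternation between the widths $(\ell_{22}-\ell_{11})/2$ and $(\ell_{22}-\ell_{11})/2-1$, and the endpoints $-1$ and $(\ell_{11}+\ell_{22}-\ell)/2-1$ requires tracking, anti-diagonal by anti-diagonal, which even lattice points fall strictly inside each of the strip, the corners, the diagonal $\D=0$ and the box $[0,\ell]^2$, so that every vertex is counted exactly once across the eight cases. Once the regions and their endpoints are fixed, each of \eqref{eq:NoOfSolsA}--\eqref{eq:NoOfSolsH} is a routine telescoping of triangle numbers.
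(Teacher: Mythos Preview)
Your proposal is correct and follows essentially the same route as the paper: partition the admissible $(\ell_{33},\ell_{44})$ region into the diagonal strip~\eqref{eq:Strip} and the two mirror-image corners, sum triangle-number multiplicities along reciprocal hyperplanes in the strip and along vertical/horizontal hyperplanes in the corners, double one corner by the $3\leftrightarrow4$ symmetry and subtract the overcounted diagonal when $\ell_{11}=\ell_{22}$, and finally handle the box constraint $\ell_{33},\ell_{44}\le\ell$ as a clipping that either subtracts a triangular piece (the last sums in \eqref{eq:NoOfSolsC},\eqref{eq:NoOfSolsD}) or shifts the index (in \eqref{eq:NoOfSolsG},\eqref{eq:NoOfSolsH}). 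Your derivation of the per-vertex multiplicity via the $2\times2$ contingency-table count $\min(r_1,r_2,c_1,c_2)+1$ is in fact a cleaner justification than the paper's, which asserts the analogous $\ell_{33}-\a_{34}+1$ directly; otherwise the two arguments coincide, and the remaining work is exactly the boundary/parity bookkeeping you anticipate.
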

\begin{remark}
  The split into eight cases will become relevant in the proof of Theorem~\ref{thm:main}. For the same reason, there is no need to evaluate the sums at this point.
\end{remark}
\begin{proof}
    For $\D<0$ it is advantageous to distinguish between the following two cases: $\ell\geq\ell_{11}+\ell_{22}-2$ and $\ell_{11}+\ell_{22}-2>\ell$. The first inequality combined with~\eqref{eq:DeltaNeg} implies $\ell\geq\ell_{33}+\ell_{44}$. Since neither of $\ell_{33},\ell_{44}$ can be greater than $\ell$ it follows that two polygon vertices lie on the quadrant axes (connected by the line $\ell=\ell_{33}+\ell_{44}$). The same holds for $\D=0\Leftrightarrow\ell_{11}+\ell_{22}=\ell_{33}+\ell_{44}$ where we separately investigate $\ell\geq\ell_{11}+\ell_{22}$ and $\ell_{11}+\ell_{22}>\ell$.

  \subsection*{Case $\D_-$ and $\ell\geq\ell_{11}+\ell_{22}-2$}
  Let us consider $\ell_{11}<\ell_{22}$ first. To count the points in the strip we will use the reciprocal hyperplanes\footnote{We cannot use Pick's theorem~\cite{beck2007computing}  as different points have different multiplicities we have to take into account.} introduced in Definition~\ref{def:lattice}. All points in the even square lattice lie on the reciprocal hyperplanes delimited by~\eqref{eq:DeltaBound} and~\eqref{eq:DeltaNeg} which gives us a very convenient way of labeling \emph{and} counting of the hyperplanes: $0\leq\ell_{11}+\D/2\leq\ell_{11}-1$.  Inequalities~\eqref{eq:DeltaBound} and~\eqref{eq:DeltaNeg} are saturated when $-\ell_{11}+\ell_{22}=\ell_{33}+\ell_{44}$ and $\ell_{11}+\ell_{22}-2=\ell_{33}+\ell_{44}$, respectively. It follows that there is $2\ell_{11}-2$ segments between the intersection points of these two lines with the axis $\ell_{33}$ or $\ell_{44}$. It also means that there is $(2\ell_{11}-2)/2+1=\ell_{11}$ reciprocal hyperplanes. There are two types of reciprocal hyperplanes. One type intersects $(\ell_{22}-\ell_{11}-2)/2+1=(\ell_{22}-\ell_{11})/2$ points and the other passes through   $(\ell_{22}-\ell_{11})/2-1$ points. This can be seen in the following way. The diagonal strip boundaries intersect $\ell_{11}+\ell_{22}-2=\ell_{33}+\ell_{44}$ at two points whose $\ell_{33}$ coordinates are $\ell_{11}-1$ and $\ell_{22}-1$. Their distance (projected onto the $\ell_{33}$ or $\ell_{44}$ axis) is $\ell_{22}-\ell_{11}$ but because these are odd coordinates no solution can lie on any vertical or horizontal line passing through them. The closest `even' points inside the strip are one segment away (from each `odd' point) and that is how we got the $(\ell_{22}-\ell_{11})/2$ points above. Thus, the neighboring reciprocal hyperplane passes through $(\ell_{22}-\ell_{11})/2-1$ points. Since we counted the number of hyperplanes to be $\ell_{11}$ (which is even) there is $\ell_{11}/2$ of them for both types. Hence the strip contains
    \begin{align}\label{eq:stripSolsIlessJ}
      s^{1,\D_-}_{<}&=\sum_{t=1}^{\ell_{11}/2}{2t(2t-1)\over2}\bigg({\ell_{22}-\ell_{11}\over2}-1\bigg)+
      \sum_{t=1}^{\ell_{11}/2}{(2t+1)2t\over2}{\ell_{22}-\ell_{11}\over2}
    \end{align}
  solutions. The $t$ parameter is set up such that it takes the corresponding values from the interval $0\leq\ell_{11}+\D/2\leq\ell_{11}-1$ governing the multiplicity factor.

  We will use the vertical hyperplanes to count the number of solutions for the rest of the polygon. In the subset where $\ell_{33}<\ell_{44}$ holds it is (see Eq.~\eqref{eq:UpperBounda34}) $\ell_{33}$ according to which the multiplicities  are calculated.
  The upper diagonal strip boundary intersects the $\ell_{44}$ axis at $\ell_{22}-\ell_{11}$ and the line $\ell_{11}+\ell_{22}-2=\ell_{33}+\ell_{44}$ intersects the axis at $\ell_{11}+\ell_{22}-2$. So there is $\ell_{11}$ vertices with nonnegative solutions. Every time  $\ell_{33}$ increases by two we get two points less and from the previous paragraph the maximal value of $\ell_{33}$ is $\ell_{11}-1-1=\ell_{11}-2$. Hence there is $(\ell_{11}-2-0)/2+1=\ell_{11}/2$ vertical axes. For $\ell_{33}>\ell_{44}$  the situation is verbatim where instead of vertical hyperplanes we study horizontal hyperplanes in the mirror image across the diagonal. Hence, the number of solutions reads
    \begin{equation}\label{eq:outsideStripSolsIlessJ}
      r^{1,\D_-}_{<}=2\sum_{t=1}^{\ell_{11}/2}{2t(2t-1)\over2}(\ell_{11}-(2t-2)).
    \end{equation}
  Summing Eqs.~\eqref{eq:stripSolsIlessJ} and~\eqref{eq:outsideStripSolsIlessJ} we obtain~\eqref{eq:NoOfSolsA}.

  For  $\ell_{11}=\ell_{22}$ the strip becomes a diagonal line. The counting with the help of vertical and horizontal line goes through in exactly the same way leading to~Eq.~\eqref{eq:outsideStripSolsIlessJ}. The diagonal solutions are, however, doubly counted since the strip is degenerate and must be subtracted. This is precisely the first term of Eq.~\eqref{eq:stripSolsIlessJ}. Therefore, for the number of solutions we get~\eqref{eq:NoOfSolsB}.

  \subsection*{Case $\D_-$ and $\ell<\ell_{11}+\ell_{22}-2$}
  Starting with $\ell_{11}<\ell_{22}$ and $\ell_{33}<\ell_{44}$, the sum $\ell_{33}+\ell_{44}$ is bounded only by~\eqref{eq:DeltaNeg} together with $\ell_{33},\ell_{44}\leq\ell$. So we insert $\ell_{33}=0$ and $\ell_{44}=\ell$ to~\eqref{eq:DeltaNeg} and then the expression $(\ell_{11}+\ell_{22}-\ell-2)/2$ counts the number of horizontal steps from the polygon vertex point $(0,\ell)$. Hence, the polygon's shape is now more complicated -- there are two more vertices on the line given by $\ell_{11}+\ell_{22}-2=\ell_{33}+\ell_{44}$. 
  It is advantageous to let the vertical hyperplanes (recall that $\ell_{33}<\ell_{44}$ is being considered) count until they hit $\ell_{11}+\ell_{22}-2=\ell_{33}+\ell_{44}$ and then subtract the inadmissible solutions -- those above the `cut-off' line $\ell_{44}=const$. The cut-off line is always $\ell_{44}=\ell$ since $\ell_{44}$ can reach it but cannot go higher ($\ell_{33},\ell_{44}\leq\ell$)\footnote{Note that the point $(\ell_{33},\ell_{44})=(0,\ell)$ satisfies constraint~\eqref{eq:DeltaNeg} unless $\ell_{11}=0$ which, however, corresponds to $\D=0$ solved as a separate case.}.
  Hence, for the number of solutions we get
        \begin{equation}\label{eq:outsideStripSolsIlessJcaseII}
          r^{2,\D_-}_{<}=2\sum_{t=1}^{\ell_{11}/2}{2t(2t-1)\over2}(\ell_{11}-(2t-2))
          -2\sum_{t=1}^{(\ell_{11}+\ell_{22}-\ell)/2-1}{2t(2t-1)\over2}\bigg({\ell_{11}+\ell_{22}-\ell\over2}-t\bigg),
        \end{equation}
  where the first term is identical to~\eqref{eq:outsideStripSolsIlessJ} and the upper bound in the second sum is given by counting the inadmissible solutions: we set $\ell_{33}=0$ in $\ell_{11}+\ell_{22}-2=\ell_{33}+\ell_{44}$, find $\ell_{44}$ and calculate $\ell_{44}-\ell=\ell_{11}+\ell_{22}-\ell-2$. So the number of vertices on the $\ell_{44}$ axis is $(\ell_{11}+\ell_{22}-\ell-2)/2+1-1$ leading to the sum's upper bound. It is also the expression in the parenthesis where the $t$ variable is set up such that vertical hyperplanes with the decreasing number of solutions (by one) are assigned the correct multiplicity factors (in the form of the triangle numbers) as revealed in Lemma~\ref{lem:DioSimSystem}. The factor of two again accounts for the solutions from mirror image situation  on the other side of the strip for $\ell_{44}<\ell_{33}$ (using  horizontal hyperplanes).

   Counting in the strip is the same as in~\eqref{eq:stripSolsIlessJ}. This is because the cut-off line never violates the points inside the strip. The cut-off line  $\ell_{44}=\ell$  intersects $\ell_{11}+\ell_{22}-2=\ell_{33}+\ell_{44}$ at $\ell_{33}=\ell_{11}-2$. By inserting this value to the upper diagonal strip boundary $\ell_{22}-\ell_{11}=-\ell_{33}+\ell_{44}$ we can see that the cut-off line cannot even get to the strip boundary. For $\ell_{44}<\ell_{33}$ we arrive at the same conclusion and so $s^{2,\D_-}_{<}=s^{1,\D_-}_{<}$ from~\eqref{eq:stripSolsIlessJ} and together with~\eqref{eq:outsideStripSolsIlessJcaseII} we get~\eqref{eq:NoOfSolsC}.

  The case $\ell_{11}=\ell_{22}$ is again a special instance of~\eqref{eq:NoOfSolsC} thus reducing it to~\eqref{eq:NoOfSolsD}.

  \subsection*{Case $\D_0$ and $\ell\geq\ell_{11}+\ell_{22}$}
  Let us recall that $\D=0$  translates into
    \begin{equation}\label{eq:DeltaZero}
    \ell_{11}+\ell_{22}=\ell_{33}+\ell_{44}.
    \end{equation}
  So now it is advantageous to separately investigate $\ell\geq\ell_{11}+\ell_{22}$ and $\ell<\ell_{11}+\ell_{22}$. Similarly to the $\D<0$ case, the first inequality implies $\ell\geq\ell_{33}+\ell_{44}$ with the same consequences for the polygon vertices. Contrary to $\D<0$ we will use the reciprocal hyperplanes (just one to be precise) to count the solutions. This is because now all the solutions  lie on~\eqref{eq:DeltaZero}.  For $\ell_{11}<\ell_{22}$ the strip defined by~\eqref{eq:Strip} becomes a line containing $(\ell_{22}-\ell_{11})/2-1$ solutions. We derived the number governing their multiplicity (see right before~\eqref{eq:Strip}) to be $\ell_{11}+\D/2=\ell_{11}$ and so the strip contributes with
  \begin{equation}\label{eq:stripDzerosolsiLj}
    s^{1,\D_0}_{<}={1\over2}(\ell_{11}+1)(\ell_{11}+2)\big((\ell_{22}-\ell_{11})/2-1\big)
  \end{equation}
  solutions. Since $\ell_{22}-\ell_{11}=-\ell_{33}+\ell_{44}$ intersect at $(\ell_{11},\ell_{22})$ there is $\ell_{11}/2+1$ points (lying on~\eqref{eq:DeltaZero}) between the strip boundary and the $\ell_{44}$ axis. The multiplicity is now based on~$\ell_{33}$ and taking into account doubling from the same argument for $\ell_{44}<\ell_{33}$ the number of solutions lying on~\eqref{eq:DeltaZero} reads
  \begin{equation}\label{eq:OutsidestripDzerosolsiLj}
    r^{1,\D_0}_{<}=2\sum_{t=1}^{\ell_{11}/2+1}{2t(2t-1)\over2}.
  \end{equation}
  Summing~\eqref{eq:stripDzerosolsiLj} and~\eqref{eq:OutsidestripDzerosolsiLj} we get~\eqref{eq:NoOfSolsE}.

  For  $\ell_{11}=\ell_{22}$ the strip intersections with~\eqref{eq:DeltaZero} is just a point and we simply add~\eqref{eq:stripDzerosolsiLj} and~\eqref{eq:OutsidestripDzerosolsiLj} resulting in~\eqref{eq:NoOfSolsF}.
  The negative contribution removes the overlapping point shared by the $\ell_{44}<\ell_{33}$ and $\ell_{44}>\ell_{33}$ solutions.

  \subsection*{Case $\D_0$ and $\ell<\ell_{11}+\ell_{22}$}
  The presence of a cut-off line $\ell_{44}=const$ has again no effect on the intersection of the strip and~\eqref{eq:DeltaZero}. As before, the lowest cut-off is $\ell_{44}=\ell_{22}=\ell$ and it intersects~\eqref{eq:DeltaZero} at $\ell_{33}=\ell_{11}$, that is, precisely at the intersection boundary given by $\ell_{22}-\ell_{11}=-\ell_{33}+\ell_{44}$. Hence the number of solutions is as in~\eqref{eq:stripDzerosolsiLj} and we write $s^{2,\D_0}_{<}=s^{1,\D_0}_{<}$.  For a generic $\ell_{44}=\ell$ we find that the boundary intersects~\eqref{eq:DeltaZero} at $\ell_{44}=\ell_{22}$ and so there is $(\ell-\ell_{22})/2+1$ points. The multiplicity is governed by $\ell_{33}$ and for $\ell_{44}=\ell$ we get from~\eqref{eq:DeltaZero} $\ell_{33}=\ell_{11}+\ell_{22}-\ell$. As we approach the strip, $\ell_{33}$ increases by two with each lattice segment. Hence, considering the identical calculation for $\ell_{33}>\ell_{44}$, we get
  \begin{equation}\label{eq:OutsidestripDzerosolsiEQj}
    r^{2,\D_0}_{<}=2\sum_{t=1}^{(\ell-\ell_{22})/2+1}{1\over2}(\ell_{11}+\ell_{22}-\ell+2t-1)(\ell_{11}+\ell_{22}-\ell+2t)
  \end{equation}
  and so~\eqref{eq:NoOfSolsG} follows. The case  $\ell_{11}=\ell_{22}$ follows as in~\eqref{eq:NoOfSolsH}.
\end{proof}
\begin{lemma}\label{lem:countingODDell}
    Given the assumptions of Lemma~\ref{lem:DioSimSystem} and for $\ell$ odd the number of nonnegative solutions of~Eq.~\eqref{eq:DiophantineSim} is a sum of the following expressions:
    \begin{subequations}\label{eq:NoOfSolsOdd}
    \begin{align}
      \msD^{1,\D_-}_{<} & =    \sum_{t=1}^{(\ell_{11}-1)/2}{2t(2t+1)\over2}\bigg({\ell_{22}-\ell_{11}\over2}-1\bigg)
                            +\sum_{t=1}^{(\ell_{11}+1)/2}{(2t-1)2t\over2}{\ell_{22}-\ell_{11}\over2}\nn\\
                     &\quad+2\sum_{t=1}^{(\ell_{11}-1)/2}{2t(2t+1)\over2}(\ell_{11}-(2t-2)-1),  \label{eq:NoOfSolsAodd}\\
      \msD^{1,\D_-}_{=} & =  -\sum_{t=1}^{(\ell_{11}-1)/2}{2t(2t+1)\over2}+2\sum_{t=1}^{(\ell_{11}-1)/2}{2t(2t+1)\over2}(\ell_{11}-(2t-2)-1),
                                        \label{eq:NoOfSolsBodd}\\
      \msD^{2,\D_-}_{<} & =    \sum_{t=1}^{(\ell_{11}-1)/2}{2t(2t+1)\over2}\bigg({\ell_{22}-\ell_{11}\over2}-1\bigg)+
                            \sum_{t=1}^{(\ell_{11}+1)/2}{(2t-1)2t\over2}{\ell_{22}-\ell_{11}\over2}\nn\\
                     &\quad+2\sum_{t=1}^{(\ell_{11}-1)/2}{2t(2t+1)\over2}(\ell_{11}-(2t-2)-1)\nn\\
                     &\quad-2\sum_{t=1}^{(\ell_{11}+\ell_{22}-\ell-3)/2}{2t(2t+1)\over2}\bigg({\ell_{11}+\ell_{22}-\ell-1\over2}-t\bigg),\label{eq:NoOfSolsCodd}\\
      \msD^{2,\D_-}_{=} & =   -\sum_{t=1}^{(\ell_{11}-1)/2}{2t(2t+1)\over2}+2\sum_{t=1}^{(\ell_{11}-1)/2}{2t(2t+1)\over2}(\ell_{11}-(2t-2)-1)\nn\\
                     &\quad-2\sum_{t=1}^{(2\ell_{11}-\ell-3)/2}{2t(2t+1)\over2}\bigg({2\ell_{11}-\ell-1\over2}-t\bigg)\label{eq:NoOfSolsDodd}\\
      \msD^{1,\D_0}_{<} & =   {1\over2}(\ell_{11}+1)(\ell_{11}+2)\big((\ell_{22}-\ell_{11})/2-1\big)+2\sum_{t=1}^{(\ell_{11}+1)/2}{2t(2t+1)\over2},
                                    \label{eq:NoOfSolsEodd}\\
      \msD^{1,\D_0}_{=} & =  - {1\over2}(\ell_{11}+1)(\ell_{11}+2)+2\sum_{t=1}^{(\ell_{11}+1)/2}{2t(2t+1)\over2},
                                    \label{eq:NoOfSolsFodd}\\
      \msD^{2,\D_0}_{<} & =   {1\over2}(\ell_{11}+1)(\ell_{11}+2)\big((\ell_{22}-\ell_{11})/2-1\big)\nn\\
                     &\quad+2\sum_{t=1}^{(\ell-\ell_{22})/2+1}{1\over2}(\ell_{11}+\ell_{22}-\ell+2t-1)(\ell_{11}+\ell_{22}-\ell+2t),
                                    \label{eq:NoOfSolsGodd}\\
      \msD^{2,\D_0}_{=} & =  -{1\over2}(\ell_{11}+1)(\ell_{11}+2)+2\sum_{t=1}^{(\ell-\ell_{11})/2+1}{1\over2}(2\ell_{11}-\ell+2t-1)(2\ell_{11}-\ell+2t),
                                    \label{eq:NoOfSolsHodd}
    \end{align}
    \end{subequations}
    where $\D_-,\D_0$ denote $\D<0$ and $\D=0$, respectively, and the subscripts $<$ and $=$ distinguish between   $\ell_{11}<\ell_{22}$ and $\ell_{11}=\ell_{22}$.
\end{lemma}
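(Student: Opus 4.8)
The plan is to follow the proof of Lemma~\ref{lem:countingEVENell} essentially verbatim, case by case, since for odd $\ell$ the only structural change is that $\ell_{ii}=\ell-2\a_{ii}$ is now odd for every $i$, so the admissible pairs $(\ell_{33},\ell_{44})$ populate the positive odd quadrant $\bbZ_{\mathrm d}^2$ of Definition~\ref{def:lattice} instead of the even quadrant $\bbZ_{\mathrm e}^2$. I would set up the same polygon in the $(\ell_{33},\ell_{44})$-plane bounded by~\eqref{eq:DeltaBound}, \eqref{eq:DeltaNeg} and the cut-offs $\ell_{33},\ell_{44}\leq\ell$, and split into the same eight cases: $\D_-$ with $\ell\geq\ell_{11}+\ell_{22}-2$ or its negation, $\D_0$ with $\ell\geq\ell_{11}+\ell_{22}$ or its negation, each refined into $\ell_{11}<\ell_{22}$ and the degenerate strip $\ell_{11}=\ell_{22}$. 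The organizing principle, which I would extract from Lemma~\ref{lem:DioSimSystem}, is that a lattice point whose binding upper bound in~\eqref{eq:UpperBounda34} equals $U$ carries multiplicity equal to the triangle number ${(U+1)(U+2)\over2}$; everything else is geometry.

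First I would count the strip via reciprocal hyperplanes as in~\eqref{eq:stripSolsIlessJ}. Here $U=\ell_{11}+\D/2$ runs over the consecutive integers $0\leq U\leq\ell_{11}-1$, so both triangle families ${(2t-1)2t\over2}$ and ${2t(2t+1)\over2}$ appear; the novelty is that for odd $\ell_{11}$ the range splits into $(\ell_{11}+1)/2$ even values of $U$ and $(\ell_{11}-1)/2$ odd values, which is the origin of the two distinct summation limits in~\eqref{eq:NoOfSolsAodd}. I would then count the remainder of the polygon with vertical and horizontal hyperplanes, doubling for the mirror image across the diagonal, exactly as in~\eqref{eq:outsideStripSolsIlessJ}. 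Outside the strip the binding bound is $U=\min[\ell_{33},\ell_{44}]$, which on the odd lattice takes only odd values, so a single family ${2t(2t+1)\over2}$ survives and each vertical hyperplane now sits at odd $\ell_{33}=2t-1$; this simultaneously shifts the multiplicity to the even-indexed triangle number and replaces the point count $\ell_{11}-(2t-2)$ by $\ell_{11}-(2t-2)-1$, reproducing~\eqref{eq:NoOfSolsAodd}. The two cases $\ell<\ell_{11}+\ell_{22}-2$ add the subtraction of inadmissible vertices above the cut-off $\ell_{44}=\ell$, recomputed for odd coordinates to give the limit $(\ell_{11}+\ell_{22}-\ell-3)/2$ in~\eqref{eq:NoOfSolsCodd}, while the $\D_0$ cases confine all solutions to~\eqref{eq:DeltaZero} and are counted by a single reciprocal hyperplane as in~\eqref{eq:stripDzerosolsiLj}. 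In every case the degenerate subcase $\ell_{11}=\ell_{22}$ is obtained, as in the even proof, by subtracting the doubly counted diagonal contribution, yielding~\eqref{eq:NoOfSolsBodd}, \eqref{eq:NoOfSolsDodd}, \eqref{eq:NoOfSolsFodd} and~\eqref{eq:NoOfSolsHodd}.

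I expect the main obstacle to be purely combinatorial bookkeeping of the parity-induced off-by-one corrections rather than any new idea. Two points are genuinely delicate. First, one must correctly pair each triangle family with the right per-hyperplane point count inside the strip: because the dense and sparse reciprocal hyperplanes now carry even or odd values of $U$ in the opposite order to the even-$\ell$ case, the factors ${(2t-1)2t\over2}$ and ${2t(2t+1)\over2}$ attach to $(\ell_{22}-\ell_{11})/2$ and $(\ell_{22}-\ell_{11})/2-1$ in swapped fashion relative to~\eqref{eq:NoOfSolsA}. Second, I would have to verify that the intersections of the strip boundary $\ell_{22}-\ell_{11}=-\ell_{33}+\ell_{44}$, the line $\ell_{11}+\ell_{22}-2=\ell_{33}+\ell_{44}$ and the cut-off $\ell_{44}=\ell$ all land on odd coordinates, which is what fixes the shifted upper limits. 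A convenient consistency check is to confirm that the eight expressions, once summed and fed through the remapping $\ell\mapsto2\ell-3$ anticipated in~\eqref{eq:ellOddRemap}, agree with the even-case count under $\ell\mapsto2\ell-1$, and that the $\ell_{11}=\ell_{22}$ reductions mirror those of the $\mathsf{E}$-expressions exactly.
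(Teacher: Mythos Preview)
Your approach is essentially the paper's: transplant the even-$\ell$ argument to the odd lattice $\bbZ_{\mathrm d}^2$ and redo the parity bookkeeping case by case. Two concrete corrections, both of the off-by-one flavour you anticipate. First, the case-split thresholds shift: since $\ell$ is odd but $\ell_{11}+\ell_{22}$ is even, the paper distinguishes $\ell\geq\ell_{11}+\ell_{22}-3$ versus $\ell<\ell_{11}+\ell_{22}-3$ for $\D_-$, and $\ell\geq\ell_{11}+\ell_{22}-1$ versus $\ell<\ell_{11}+\ell_{22}-1$ for $\D_0$, not the even-case thresholds you copied. Second, your expectation about the intersection parities is inverted in the $\D_-$ case: the strip boundaries meet $\ell_{11}+\ell_{22}-2=\ell_{33}+\ell_{44}$ at $\ell_{33}$-coordinates $\ell_{11}-1$ and $\ell_{22}-1$, which are \emph{even} (hence carry no solutions), whereas in the $\D_0$ case the intersections at $\ell_{11},\ell_{22}$ are odd and do carry solutions. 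The paper also shifts the axes to $\ell_{33}=1,\ \ell_{44}=1$ since the smallest odd value is $1$, which is why the outside-strip vertex count on the $\ell_{44}$ axis becomes $\ell_{11}-1$ rather than $\ell_{11}$. With these three adjustments your derivation goes through and reproduces~\eqref{eq:NoOfSolsOdd}.
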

\begin{remark}
   There does not seem to exist an easy way of applying the even $\ell$ results to the odd case. The proof, however, bears similarities to the proof of Lemma~\ref{lem:countingEVENell} including the split into several (eight) cases. This makes  counting easier and also serves for the sake of proof of Theorem~\ref{thm:main}. One of the cases we need to consider separately is when $\ell_{11}=\ell_{22}$. It turns out to be given by the $\ell_{11}<\ell_{22}$ case (by setting $\ell_{11}=\ell_{22}$) like in Lemma~\ref{lem:countingEVENell}.
\end{remark}
\begin{proof}
    Here it is advantageous to distinguish between  $\ell\geq\ell_{11}+\ell_{22}-3$ and $\ell_{11}+\ell_{22}-3>\ell$. The first inequality combined with~\eqref{eq:DeltaNeg} implies $\ell+1\geq\ell_{33}+\ell_{44}$. Since neither of $\ell_{33},\ell_{44}$ can be greater than $\ell$ it follows that two polygon vertices lie on the quadrant axes (connected by the line $\ell=\ell_{33}+\ell_{44}$). When $\D=0$  we separately investigate $\ell\geq\ell_{11}+\ell_{22}-1$ and $\ell<\ell_{11}+\ell_{22}-1$ for the same reason.

    An important difference compared to Lemma~\ref{lem:countingEVENell} is the location of the positive axes $\ell_{33}$ and $\ell_{44}$ in the odd square lattice introduced in Definition~\ref{def:lattice}. The axis $\ell_{33}$ will be identified with $x_2=1$ and $\ell_{44}$ with $x_1=1$. The reason is that unlike the even case, the solution-counting vertices in the square lattice lie on the odd coordinates and the smallest odd number is one. Because there is no nonnegative solution lying on a $(0,i)$ or $(j,0)$ we will use the shifted coordinate system in the next four subsections.

    \subsection*{Case $\D_-$ and $\ell\geq\ell_{11}+\ell_{22}-3$}
    Consider $\ell_{11}<\ell_{22}$. We will use the reciprocal hyperplanes to count the solutions in the strip area as they are characterized by $0\leq\ell_{11}+\D/2\leq\ell_{11}-1$ shown in Lemma~\ref{lem:DioSimSystem}. Exactly as in the proof of Lemma~\ref{lem:countingEVENell} (Case $\D<0$ and $\ell\geq\ell_{11}+\ell_{22}-2$) we find the diagonal strip intersection to be at two points whose $\ell_{33}$ coordinates are $\ell_{11}-1$ and $\ell_{22}-1$. Again, their distance (projected onto the $\ell_{33}$ or $\ell_{44}$ axis) is $\ell_{22}-\ell_{11}$ and here the analysis starts to differ. The coordinates $\ell_{11}-1$ and $\ell_{22}-1$ are even so no solution can lie on any vertical or horizontal line intersecting them. The closest `odd' points inside the strip are one segment away (from each `even' point) and so there is $(\ell_{22}-\ell_{11})/2$ vertices. Consequently, the neighboring reciprocal hyperplane intersects $(\ell_{22}-\ell_{11})/2-1$ vertices. Since the first and last hyperplane (given by $-\ell_{11}+\ell_{22}=\ell_{33}+\ell_{44}$ and $\ell_{11}+\ell_{22}-2=\ell_{33}+\ell_{44}$, respectively) counts $(\ell_{22}-\ell_{11})/2$ solutions and the total number  of hyperplanes is $\ell_{11}$ (odd) we get the first two summands of~\eqref{eq:NoOfSolsAodd} with different upper bounds.

    The vertical hyperplanes will be used for the region outside the strip where $\ell_{33}<\ell_{44}$ since the multiplicity factor is given by  their $\ell_{33}$ coordinate. The upper diagonal strip boundary intersects the $\ell_{44}$ axis at $\ell_{22}-\ell_{11}+1$ and $\ell_{11}+\ell_{22}-2=\ell_{33}+\ell_{33}$ intersects it at $\ell_{11}+\ell_{22}-3$. So there is $\ell_{11}-1$ vertices with nonnegative solutions. Every time  $\ell_{33}$ increases by two we get two points less and from the previous paragraph the maximal value of $\ell_{33}$ is $\ell_{11}-1-1=\ell_{11}-2$. Hence there is $(\ell_{11}-3)/2+1=(\ell_{11}-1)/2$ vertical axes and the last summand of~\eqref{eq:NoOfSolsAodd} is found (multiplied by two to account for the mirror case $\ell_{33}>\ell_{44}$).

    Eq.~\eqref{eq:NoOfSolsBodd} is obtained by setting $\ell_{11}=\ell_{22}$.

    \subsection*{Case $\D_-$ and $\ell<\ell_{11}+\ell_{22}-3$}
    The situation is very similar to the relation between~\eqref{eq:NoOfSolsA} and~\eqref{eq:NoOfSolsC} so we only highlight a different step. The first three terms of~\eqref{eq:NoOfSolsCodd} are the same as in~\eqref{eq:NoOfSolsAodd} and the last term removes the inadmissible solutions above the cut-off line(s) (for $\ell_{33}<\ell_{44}$ and its diagonal mirror image $\ell_{33}>\ell_{44}$). Considering $\ell_{33}<\ell_{44}$, the line $\ell_{11}+\ell_{22}-2=\ell_{33}+\ell_{33}$ intersects the $\ell_{44}$ axis at $\ell_{11}+\ell_{33}-3$ and the distance from the cut-off line $\ell_{44}=\ell$ is $\ell_{11}+\ell_{22}-\ell-3$. So the number of inadmissible vertices on the axis is $(\ell_{11}+\ell_{22}-\ell-3)/2+1-1$. This is the upper bound in the last sum of~\eqref{eq:NoOfSolsCodd} and the expression in the parenthesis. The parameter $t$ is again set up to properly count the inadmissible nonnegative solutions on the vertical hyperplanes together with their multiplicities.

    Eq.~\eqref{eq:NoOfSolsDodd} is obtained by setting $\ell_{22}=\ell_{11}$ in~\eqref{eq:NoOfSolsCodd}.

    \subsection*{Case $\D_0$ and $\ell\geq\ell_{11}+\ell_{22}-1$}
    All solutions lie on the reciprocal hyperplane given by $\D=0\Leftrightarrow\ell_{11}+\ell_{22}=\ell_{33}+\ell_{44}$. The strip solutions lie between the points given by the intersection of~\eqref{eq:Strip} and $\D=0$ whose projection onto the $\ell_{33}$ axis equals $\ell_{11}$ and $\ell_{22}$. So the intersection point are odd and therefore containing admissible nonnegative solutions. Their (projected) distance is $\ell_{22}-\ell_{11}$ and the strip vertices lie between them (on $\D=0$). Henceforth, there is $(\ell_{22}-\ell_{11})/2+1-2$ of them and the multiplicity is calculated from $\min{[\ell_{11},\ell_{22}]}+{\D/2}=\ell_{11}$ according to Lemma~\eqref{lem:DioSimSystem}. This is the first term in~\eqref{eq:NoOfSolsEodd}. For the second term, if $\ell_{33}<\ell_{44}$, there is $\ell_{11}-1$ segments between the strip upper boundary and the $\ell_{44}$ axis and so $(\ell_{11}-1)/2+1$ vertices. The multiplicity is calculated from $\ell_{33}$ and the $t$ parameter in the second term of~\eqref{eq:NoOfSolsEodd} does precisely that. The factor of two accounts for the $\ell_{33}>\ell_{44}$ situation.

    Eq.~\eqref{eq:NoOfSolsFodd} is obtained by setting $\ell_{22}=\ell_{11}$ in~\eqref{eq:NoOfSolsEodd}.

    \subsection*{Case $\D_0$ and $\ell<\ell_{11}+\ell_{22}-1$}

    As before, the cut-off line $\ell_{44}=\ell$ removes some solutions from $\D=0$ (in the $\ell_{33}<\ell_{44}$ case) but always  outside the strip. So the first summand of~\eqref{eq:NoOfSolsGodd} is identical to the first summand of~\eqref{eq:NoOfSolsEodd}. For the part of $\D=0$ outside and on the boundary of the strip we notice that the cut-off line $\ell_{44}=\ell$ intersects $\D=0$ at the point $\ell_{33}=\ell_{11}+\ell_{22}-\ell$ which is $\ell-\ell_{22}$ segments away from the upper diagonal strip boundary point (distance measured by projecting onto the $\ell_{33}$ axis). Hence there is only $(\ell-\ell_{22})/2+1$ admissible vertices on $\D=0$ and we recovered the upper bound of the second sum in~\eqref{eq:NoOfSolsGodd}. We sum over the multiplicity governed by $\ell_{33}$ in this region and that is determined by the $t$ variable in the second sum. As before, for $\ell_{33}>\ell_{44}$ the situation is identical and it brings an overall factor of two.

    The last expression,  Eq.~\eqref{eq:NoOfSolsHodd}, is obtained by setting $\ell_{22}=\ell_{11}$ in~\eqref{eq:NoOfSolsGodd}.

\end{proof}
\begin{lemma}\label{lem:ell1ell22}
  For $\ell_{11}>\ell_{22}$ the number of solutions of Diophantine system~\eqref{eq:DiophantineSim} is equal to the number of solutions for $\ell_{11}<\ell_{22}$ in Lemma~\ref{lem:countingEVENell}. That is
    \begin{subequations}
      \begin{align}
        \msE^{1,\D_-}_{>} & = \msE^{1,\D_-}_{<}, \\
        \msE^{2,\D_-}_{>} & = \msE^{2,\D_-}_{<}, \\
        \msE^{1,\D_0}_{>} & = \msE^{1,\D_0}_{<}, \\
        \msE^{2,\D_0}_{>} & = \msE^{2,\D_0}_{<},
      \end{align}
    \end{subequations}
    and similarly for  odd $\ell$, Eqs.~\eqref{eq:NoOfSolsOdd}, in Lemma~\ref{lem:countingODDell}. The subscript $>$ denotes the case of interest $\ell_{11}>\ell_{22}$.
\end{lemma}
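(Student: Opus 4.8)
The plan is to establish the four claimed equalities (and their odd-$\ell$ analogues) not by re-evaluating any of the sums in Lemmas~\ref{lem:countingEVENell} and~\ref{lem:countingODDell}, but by exhibiting an explicit bijection between the solution set for $\ell_{11}>\ell_{22}$ and the solution set of the corresponding instance with $\ell_{11}<\ell_{22}$. The natural candidate is the symmetry of system~\eqref{eq:DiophantineSim} under the relabeling of indices $1\leftrightharpoons2$ (fixing $3$ and $4$), which is the counterpart of the $1\leftrightharpoons3$, $2\leftrightharpoons4$ symmetry already used at the end of the proof of Lemma~\ref{lem:DioSimSystem} to fold $\D>0$ onto $\D<0$.

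Concretely, I would introduce the involution on the six unknowns induced by the transposition $(1\,2)$ acting on the unordered index pairs: it fixes $\a_{12}$ and $\a_{34}$ and interchanges $\a_{13}\leftrightharpoons\a_{23}$ and $\a_{14}\leftrightharpoons\a_{24}$. Writing $\b$ for the image tuple (so $\b_{12}=\a_{12}$, $\b_{13}=\a_{23}$, $\b_{14}=\a_{24}$, $\b_{23}=\a_{13}$, $\b_{24}=\a_{14}$, $\b_{34}=\a_{34}$), a one-line substitution into~\eqref{eq:Diophantine1Sim}--\eqref{eq:Diophantine4Sim} shows that $\b$ is a nonnegative solution of the same system but with the right-hand side $(\ell_{11},\ell_{22},\ell_{33},\ell_{44})$ replaced by $(\ell_{22},\ell_{11},\ell_{33},\ell_{44})$: the first two equations simply trade places while the third and fourth are carried to themselves. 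Since $\a\mapsto\b$ is its own inverse on nonnegative integer tuples, it is a bijection, and hence the number of nonnegative solutions, say $N(\ell_{11},\ell_{22},\ell_{33},\ell_{44})$, satisfies $N(\ell_{11},\ell_{22},\ell_{33},\ell_{44})=N(\ell_{22},\ell_{11},\ell_{33},\ell_{44})$. In particular, a configuration with $\ell_{11}>\ell_{22}$ is counted by the $\ell_{11}<\ell_{22}$ formulas evaluated after interchanging the two values, which is exactly the content of the asserted identities.

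It then remains to check that the swap respects the partition of the counting problem into the eight subcases, so that $\msE^{1,\D_-}_{>}$ is matched with $\msE^{1,\D_-}_{<}$ rather than with some other piece, and likewise for the remaining three and for the odd-$\ell$ expressions $\msD$. This is immediate: the invariant $\D=-\ell_{11}-\ell_{22}+\ell_{33}+\ell_{44}$ depends on $\ell_{11}$ and $\ell_{22}$ only through their sum, so the classification $\D\lessgtr0$, $\D=0$ from Lemma~\ref{lem:DioSimSystem} is preserved under the interchange; similarly the type-$1$/type-$2$ split, which compares $\ell$ with $\ell_{11}+\ell_{22}$, is preserved because both $\ell$ and $\ell_{11}+\ell_{22}$ are unchanged. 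Thus each $\D_-/\D_0$, type-$1$/type-$2$ piece is carried to its $<$-counterpart, and the same argument applies verbatim to odd $\ell$ in Lemma~\ref{lem:countingODDell}.

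I do not expect a genuine obstacle here; the only thing requiring care is the bookkeeping of which subcase is sent to which under the involution, together with noting that the degenerate locus $\ell_{11}=\ell_{22}$ (handled by the $=$ formulas) is precisely the fixed point set of the swap and therefore needs no separate treatment. The computational content of the lemma is nil once the bijection is in place — its role is purely to justify reusing the $\ell_{11}<\ell_{22}$ count in the $\ell_{11}>\ell_{22}$ regime when the pieces are assembled into $\mathsf{e}(\ell)$ and $\mathsf{d}(\ell)$ in Theorem~\ref{thm:main}.
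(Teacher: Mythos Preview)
Your proof is correct and uses the same $1\leftrightharpoons2$ symmetry as the paper's two-line argument, only with the bijection on solution tuples and the preservation of the $\D$- and type-$1$/type-$2$ subcases spelled out explicitly. Your remark that equations~\eqref{eq:Diophantine3Sim} and~\eqref{eq:Diophantine4Sim} are each carried to themselves under the swap is in fact more accurate than the paper's assertion that they are interchanged.
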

\begin{proof}
   Invariance w.r.t. the permutation $1\leftrightharpoons2$ is another symmetry of~\eqref{eq:DiophantineSim}. The permutation swaps~\eqref{eq:Diophantine1Sim} with~\eqref{eq:Diophantine2Sim} and~\eqref{eq:Diophantine3Sim} with~\eqref{eq:Diophantine4Sim} and keeps $\D$ intact. Hence, if $\ell_{11}>\ell_{22}$ we permute~\eqref{eq:DiophantineSim} and apply Lemma~\ref{lem:DioSimSystem} in order to calculate $\msE^{1,\D_-}_{<}, \msE^{2,\D_-}_{<}, \msE^{1,\D_0}_{<}$ and $\msE^{2,\D_0}_{<}$ in Lemma~\ref{lem:countingEVENell} and $\msD^{1,\D_-}_{<}, \msD^{2,\D_-}_{<}, \msD^{1,\D_0}_{<}$ and $\msD^{2,\D_0}_{<}$ in Lemma~\ref{lem:countingODDell}.
\end{proof}
\begin{proof}[Proof of Theorem~\ref{thm:main}]
    Lemmas~\ref{lem:countingEVENell} and~\ref{lem:countingODDell} counted the solutions for a given $\ell_{11}$ and $\ell_{22}$ so our task is to sum over all such pairs. Because the lemmas are split into several cases we have to adjust the summation procedure accordingly. Basically, all the work is about finding the way to reconcile the condition $\ell_{11}<\ell_{22}$ or $\ell_{11}=\ell_{22}$ with the different investigated cases. There is a technical assumption we have to make. We found~\eqref{eq:ellEven} for $\ell$ even and~\eqref{eq:ellOdd} for $\ell$ odd. However, the summing slightly differs between $4\mid\ell$  and $4\mid\ell-2$ in the former case and $4\mid\ell-1$  and $4\mid\ell-3$ in the latter case. The results are identical and  we present the derivation only for $4\mid\ell$ and $4\mid\ell-1$ in order not to overblow the proof.

    \subsection*{Case $4\mid\ell$ and $\D_-$}
    For $\D<0$ the inequalities $\ell\geq\ell_{11}+\ell_{22}-2$ and $\ell_{11}<\ell_{22}$
    \begin{align}\label{eq:DelNeg_iGj1}
      &\sum_{\ell_{22}=2,4,\dots}^\ell \msE^{1,\D_-}_{>}
      +\sum_{\ell_{11}=2,4,\dots}^{\ell/2}\,\sum_{\ell_{22}=\ell_{11}+2}^{\ell-\ell_{11}+2}\msE^{1,\D_-}_{<}=\frac{1}{46080}\ell (\ell+4) (\ell+8) \left(\ell^3+15 \ell^2+83 \ell+204\right),
    \end{align}
    where in the first sum we set $\ell_{11}=0$ that must be treated separately. In fact, the first summand equals zero. For $\ell<\ell_{11}+\ell_{22}-2$ we get
    \begin{align}\label{eq:DelNeg_iGj2}
      &\sum_{\ell_{11}=\ell/2+2}^{\ell-2}\,\sum_{\ell_{22}=\ell_{11}+2}^{\ell}\msE^{2,\D_-}_{<}
      +\sum_{\ell_{11}=4}^{\ell/2}\,\sum_{\ell_{22}=\ell-\ell_{11}+4}^{\ell}\msE^{2,\D_-}_{<}\nn\\
      &=\frac{1}{46080}(\ell-4) \ell (\ell+4) \left(19 \ell^3+153 \ell^2+509 \ell+528\right).
    \end{align}
    The first term starts counting where the first sum of the second term in~\eqref{eq:DelNeg_iGj1} terminated. The second term in~\eqref{eq:DelNeg_iGj2} starts summing where the second sum of the second term in~\eqref{eq:DelNeg_iGj1} terminated.     By summing~\eqref{eq:DelNeg_iGj1} and~\eqref{eq:DelNeg_iGj2} we obtain
    \begin{equation}\label{eq:DelNeg_iGj}
      \mathsf{e}^{\D_-}_{<}=\frac{1}{2304}\ell (\ell+4) \left(\ell^4+5 \ell^3+5 \ell^2-32 \ell-24\right).
    \end{equation}
    If $\ell_{11}=\ell_{22}$ the counting is simpler. In the first case we have
    \begin{equation}\label{eq:DelNeg_iEj1}
      \sum_{\ell_{11}=2,4,\dots}^{\ell/2}\msE^{1,\D_-}_{=}=\frac{1}{7680}\ell (\ell+2) (\ell+4) (\ell+6) (\ell+8)
    \end{equation}
    (note that $\ell_{11}=\ell_{22}=0$ is excluded since only $\ell_{33}=\ell_{44}$ is admissible and so it belongs to the $\D=0$ case) and in the second case we continue summing by
    \begin{equation}\label{eq:DelNeg_iEJ2}
      \sum_{\ell_{11}=\ell/2+2}^{\ell}\msE^{1,\D_-}_{=}=\frac{1}{7680}\ell (\ell+4) \left(23 \ell^3+148 \ell^2+388 \ell+128\right).
    \end{equation}
    The sum of the last two expressions reads
    \begin{equation}\label{eq:DelNeg_iEJ}
      \mathsf{e}^{\D_-}_{=}=\frac{1}{1920}\ell (\ell+4) \left(6 \ell^3+41 \ell^2+116 \ell+56\right).
    \end{equation}

    \subsection*{Case $4\mid\ell$  and $\D_0$}
    For $\ell\geq\ell_{11}+\ell_{22}$ and $\ell_{11}<\ell_{22}$ we have
    \begin{equation}\label{eq:DelZer_iGj1}
    \sum_{\ell_{11}=0,2,\dots}^{\ell/2-2}\,\sum_{\ell_{22}=\ell_{11}+2}^{\ell-\ell_{11}}\msE^{1,\D_0}_{<}
    =\frac{1}{7680}\ell (\ell+4) (\ell+8) \left(2 \ell^2+11 \ell+24\right)
    \end{equation}
    while for $\ell<\ell_{11}+\ell_{22}$ the first term continues summing where the first sum in~\eqref{eq:DelZer_iGj1} left off summing. The second term continues the work of the second sum of~\eqref{eq:DelZer_iGj1}:
    \begin{align}\label{eq:DelZer_iGj2}
      &\sum_{\ell_{11}=\ell/2}^{\ell-2}\,\sum_{\ell_{22}=\ell_{11}+2}^{\ell}\msE^{2,\D_0}_{<}
      +\sum_{\ell_{11}=2}^{\ell/2-2}\,\sum_{\ell_{22}=\ell-\ell_{11}+2}^{\ell}\msE^{2,\D_0}_{<}
      =\frac{1}{7680}\ell (\ell+4) \left(14 \ell^3+69 \ell^2+224 \ell-16\right).
    \end{align}
    By summing~\eqref{eq:DelZer_iGj1} and~\eqref{eq:DelZer_iGj2} we arrive at
    \begin{equation}\label{eq:DelZer_iGj}
      \mathsf{e}^{\D_0}_{<}=\frac{1}{480} \ell (\ell+4) \left(\ell^3+6 \ell^2+21 \ell+11\right).
    \end{equation}
    When $\ell_{11}=\ell_{22}$ we obtain
    \begin{equation}\label{eq:Delzer_iEj1}
      \sum_{\ell_{11}=0,2,\dots}^{\ell/2}\msE^{1,\D_0}_{=}=\frac{1}{768} (\ell+4) (\ell+8) \left(\ell^2+8 \ell+24\right)
    \end{equation}
    and
    \begin{equation}\label{eq:Delzer_iEj2}
      \sum_{\ell_{11}=\ell/2+2}^{\ell}\msE^{2,\D_0}_{=}=\frac{1}{768} \ell (\ell+4) \left(7 \ell^2+32 \ell+120\right).
    \end{equation}
    Their sum equals
    \begin{equation}\label{eq:Delzer_iEj}
      \mathsf{e}^{\D_0}_{=}=\frac{1}{96} (\ell+4) \left(\ell^3+6 \ell^2+26 \ell+24\right).
    \end{equation}

    \subsection*{Case $4\mid\ell-1$ and $\D_-$}
    For $\D<0$ the inequalities $\ell\geq\ell_{11}+\ell_{22}-3$ and $\ell_{11}<\ell_{22}$
    \begin{align}\label{eq:DelNeg_iGj1odd}
      &\sum_{\ell_{22}=3,5,\dots}^\ell \msD^{1,\D_-}_{<}
      +\sum_{\ell_{11}=3,5,\dots}^{(\ell+1)/2}\,\sum_{\ell_{22}=\ell_{11}+2}^{\ell-\ell_{11}+3}\msD^{1,\D_-}_{<}\nn\\
      &=\frac{1}{46080}(\ell-1) \left(\ell^5+34 \ell^4+479 \ell^3+3509 \ell^2+14268 \ell+10125\right),
    \end{align}
    where in the first sum we set $\ell_{11}=1$ to be treated separately. For $\ell<\ell_{11}+\ell_{22}-3$ we get
    \begin{align}\label{eq:DelNeg_iGj2odd}
      &\sum_{\ell_{11}=(\ell+1)/2+2}^{\ell-2}\,\sum_{\ell_{22}=\ell_{11}+2}^{\ell}\msD^{2,\D_-}_{<}
      +\sum_{\ell_{11}=5}^{(\ell+1)/2}\,\sum_{\ell_{22}=\ell-\ell_{11}+5}^{\ell}\msD^{2,\D_-}_{<}\nn\\
      &=\frac{1}{46080}(\ell-5) (\ell-1) \left(19 \ell^4+261 \ell^3+1526 \ell^2+4221 \ell+2997\right).
    \end{align}
    Eqs.~\eqref{eq:DelNeg_iGj1odd} and~\eqref{eq:DelNeg_iGj2odd} sum to
    \begin{equation}\label{eq:DelNeg_iGJ}
      \mathsf{d}^{\D_-}_{<}=\frac{1}{2304}(\ell-1) (\ell+3) \left(\ell^4+7 \ell^3+14 \ell^2-37 \ell-81\right).
    \end{equation}
    For $\ell_{11}=\ell_{22}$ we obtain
    \begin{equation}\label{eq:DelNeg_iEj1odd}
      \sum_{\ell_{11}=3,5,\dots}^{(\ell+1)/2}\msD^{1,\D_-}_{=}
      =\frac{1}{7680}(\ell-1) (\ell+3) (\ell+7) \left(\ell^2+16 \ell+75\right)
    \end{equation}
    (note that $\ell_{11}=\ell_{22}=0$ is excluded since then only $\ell_{33}=\ell_{44}$ is admissible and so it belongs to the $\D=0$ case) and in the second case we continue summing:
    \begin{equation}\label{eq:DelNeg_iEJ2odd}
      \sum_{\ell_{11}=(\ell+1)/2+2}^{\ell}\msD^{1,\D_-}_{=}
      =\frac{1}{7680}(\ell-1) \left(23 \ell^4+258 \ell^3+1148 \ell^2+2038 \ell+885\right).
    \end{equation}
    The sum of~\eqref{eq:DelNeg_iEj1odd} and~\eqref{eq:DelNeg_iEJ2odd} equals
    \begin{equation}\label{eq:DelNeg_iEjodd}
      \mathsf{d}^{\D_-}_{=}=\frac{1}{1920}(\ell-1) (\ell+3) \left(6 \ell^3+53 \ell^2+192 \ell+205\right).
    \end{equation}

    \subsection*{Case $4\mid\ell-1$  and $\D_0$}
    For $\ell\geq\ell_{11}+\ell_{22}-1$ and $\ell_{11}<\ell_{22}$ we find
    \begin{equation}\label{eq:DelZer_iGj1odd}
    \sum_{\ell_{11}=1,3,\dots}^{(\ell+1)/2-2}\,\sum_{\ell_{22}=\ell_{11}+2}^{\ell-\ell_{11}+1}\msD^{1,\D_0}_{<}
    =\frac{1}{7680}(\ell-1) (\ell+3) \left(2 \ell^3+41 \ell^2+304 \ell+805\right).
    \end{equation}
    Similarly to $\ell$ even, for $\ell<\ell_{11}+\ell_{22}-1$ the first term continues summing where the first sum in~\eqref{eq:DelZer_iGj1odd} ended and the second term continues where the second sum of~\eqref{eq:DelZer_iGj1odd} terminated:
    \begin{align}\label{eq:DelZer_iGj2odd}
      &\sum_{\ell_{11}=(\ell+1)/2}^{\ell-2}\,\sum_{\ell_{22}=\ell_{11}+2}^{\ell}\msD^{2,\D_0}_{<}
      +\sum_{\ell_{11}=3}^{(\ell+1)/2-2}\,\sum_{\ell_{22}=\ell-\ell_{11}+3}^{\ell}\msD^{2,\D_0}_{<}\nn\\
      &=\frac{1}{7680}(\ell-1) (\ell+3) \left(14 \ell^3+87 \ell^2+208 \ell-165\right).
    \end{align}
    The sum of~\eqref{eq:DelZer_iGj1odd} and~\eqref{eq:DelZer_iGj2odd} equals
    \begin{equation}\label{eq:DelZer_iGjodd}
      \mathsf{d}^{\D_0}_{<}=\frac{1}{480} (\ell-1) (\ell+2) (\ell+3) \left(\ell^2+6 \ell+20\right)
    \end{equation}
    When $\ell_{11}=\ell_{22}$ we obtain
    \begin{equation}\label{eq:Delzer_iEj1odd}
      \sum_{\ell_{11}=1,3,\dots}^{(\ell+1)/2}\msD^{1,\D_0}_{=}=\frac{1}{768} (\ell+3) (\ell+7) \left(\ell^2+14 \ell+57\right)
    \end{equation}
    and
    \begin{equation}\label{eq:Delzer_iEj2odd}
      \sum_{\ell_{11}=(\ell+1)/2+2}^{\ell}\msD^{2,\D_0}_{=}=\frac{7}{768} (\ell-1) \left(\ell^3+9 \ell^2+35 \ell+51\right)
    \end{equation}
    with their sum being
    \begin{equation}\label{eq:Delzer_iEjodd}
      \mathsf{d}^{\D_0}_{=}=\frac{1}{96} (\ell+3) \left(\ell^3+7 \ell^2+29 \ell+35\right).
    \end{equation}

    Denoting $\D>0$ by $\D_+$, Lemma~\ref{lem:DioSimSystem} tells us that $\mathsf{e}^{\D_+}_{<}=\mathsf{e}^{\D_-}_{<},\mathsf{e}^{\D_+}_{=}=\mathsf{e}^{\D_-}_{=}$ and $\mathsf{d}^{\D_+}_{<}=\mathsf{d}^{\D_-}_{<},\mathsf{d}^{\D_+}_{=}=\mathsf{d}^{\D_-}_{=}$. Lemma~\ref{lem:ell1ell22} brings the solutions for $\ell_{11}>\ell_{22}$: $\mathsf{e}^{\D_-}_{>}=\mathsf{e}^{\D_-}_{<}, \mathsf{e}^{\D_+}_{>}=\mathsf{e}^{\D_+}_{<}$ and $\mathsf{e}^{\D_0}_{>}=\mathsf{e}^{\D_0}_{<}$. The same holds for odd $\ell$ and we get $\mathsf{d}^{\D_-}_{>}=\mathsf{d}^{\D_-}_{<},\mathsf{d}^{\D_+}_{>}=\mathsf{d}^{\D_+}_{<}$ and $\mathsf{d}^{\D_0}_{>}=\mathsf{d}^{\D_0}_{<}$. Eq.~\eqref{eq:ellEven} is obtained from
    \begin{equation}
      \mathsf{e}=4\mathsf{e}^{\D_-}_{<}+2\mathsf{e}^{\D_-}_{=}+2\mathsf{e}^{\D_0}_{<}+\mathsf{e}^{\D_0}_{=}
    \end{equation}
    and Eq.~\eqref{eq:ellOdd} from
    \begin{equation}
      \mathsf{d}=4\mathsf{d}^{\D_-}_{<}+2\mathsf{d}^{\D_-}_{=}+2\mathsf{d}^{\D_0}_{<}+\mathsf{d}^{\D_0}_{=}.
    \end{equation}
\end{proof}

\section{Secondary result}
If we set $\a_{i4}=0$ in~\eqref{eq:DiophantineSim} then it becomes a simpler linear system with an interesting number of nonnegative solutions.
\begin{proposition}\label{prop:Floyd}
  The number of nonnegative solutions of the following system of linear Diophantine equations
  \begin{subequations}\label{eq:Floyd}
  \begin{align}
    2\a_{11}+\a_{12}+\a_{13} & = \ell,\label{eq:Floyd1} \\
    \a_{12}+2\a_{22}+\a_{23} & = \ell,\label{eq:Floyd2} \\
    \a_{13}+\a_{23}+2\a_{33} & = \ell,\label{eq:Floyd3}
  \end{align}
  \end{subequations}
  is
  \begin{equation}\label{eq:FloydNoofSols}
    f=\frac{1}{16} (\ell+2) \left(\ell^2+4 \ell+8\right)
  \end{equation}
  for $\ell$ even and zero for $\ell$ odd.
\end{proposition}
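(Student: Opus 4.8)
The plan is to reduce the count to a constrained lattice-point count and then finish by a short inclusion--exclusion. First I would dispose of the odd case by a parity argument: adding \eqref{eq:Floyd1}, \eqref{eq:Floyd2} and \eqref{eq:Floyd3} gives $2(\a_{11}+\a_{22}+\a_{33}+\a_{12}+\a_{13}+\a_{23})=3\ell$, so the left side is even and $\ell$ must be even. Hence there is no nonnegative solution when $\ell$ is odd, which settles that half of the claim.

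For $\ell$ even I would apply the diagonal substitution of Lemma~\ref{lem:DioSimSystem}, setting $\ell_{ii}=\ell-2\a_{ii}$ for $i=1,2,3$, so that \eqref{eq:Floyd} turns into the symmetric \emph{edge-sum} system
\begin{equation*}
  \a_{12}+\a_{13}=\ell_{11},\quad \a_{12}+\a_{23}=\ell_{22},\quad \a_{13}+\a_{23}=\ell_{33}.
\end{equation*}
As $\a_{ii}$ runs bijectively over $\{0,1,\dots,\ell/2\}$, each $\ell_{ii}$ runs over the even values $\{0,2,\dots,\ell\}$. Unlike the four-equation system, this three-equation system can be \emph{solved uniquely}: adding the three lines and halving (the sum $\ell_{11}+\ell_{22}+\ell_{33}$ is even) gives $\a_{12}=S-\ell_{33}$, $\a_{13}=S-\ell_{22}$, $\a_{23}=S-\ell_{11}$ with $S=(\ell_{11}+\ell_{22}+\ell_{33})/2$. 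Thus each admissible diagonal triple $(\ell_{11},\ell_{22},\ell_{33})$ yields exactly one candidate, and it is nonnegative precisely when the three \emph{triangle inequalities} $\ell_{ii}\le\ell_{jj}+\ell_{kk}$ hold. This uniqueness is what lets us bypass the multiplicity/strip machinery of Lemmas~\ref{lem:countingEVENell}--\ref{lem:countingODDell} entirely.

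Writing $\ell_{ii}=2m_i$ and $n=\ell/2$, the number of solutions then equals the number of lattice points $(m_1,m_2,m_3)\in\{0,\dots,n\}^3$ obeying $m_i\le m_j+m_k$. I would count these by inclusion--exclusion on the complementary violation events $m_i>m_j+m_k$. The key simplification, and the one step deserving care, is that these three events are pairwise disjoint: if both $m_1>m_2+m_3$ and $m_2>m_1+m_3$ held, adding would force $0>2m_3$, impossible. Hence the count is $(n+1)^3-3V$, where $V=\#\{\,m_1>m_2+m_3,\ 0\le m_i\le n\,\}$; summing over $s=m_2+m_3$ (the bound $m_i\le n$ is never binding since $s<m_1\le n$) gives $V=\sum_{s=0}^{n-1}(s+1)(n-s)=\binom{n+2}{3}$.

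Finishing is routine: $(n+1)^3-\tfrac12 n(n+1)(n+2)=\tfrac12(n+1)(n^2+2n+2)$, and substituting $n=\ell/2$ yields $\tfrac1{16}(\ell+2)(\ell^2+4\ell+8)$, which is \eqref{eq:FloydNoofSols}. I anticipate no serious obstacle beyond the triangle-inequality reformulation and the disjointness observation, which together collapse the inclusion--exclusion to a single easily summed term. As a consistency check one notes the answer equals $\tfrac12 m(m^2+1)$ with $m=\ell/2+1$, the magic constant of order $m$, matching the statement's advertised link to Floyd's triangle~\seqnum{A006003} and the magic square.
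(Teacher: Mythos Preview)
Your proof is correct, and it is considerably cleaner than the paper's own argument. The paper also reduces to the triangle-inequality characterization (this is the content of Lemma~\ref{lem:Floyd}, which you effectively re-derive; your citation of Lemma~\ref{lem:DioSimSystem} is slightly off, since that lemma treats the four-equation system, but the substitution $\ell_{ii}=\ell-2\a_{ii}$ is the same). From there, however, the paper proceeds by an explicit case split: it separates $\ell_{11}<\ell_{22}$ from $\ell_{11}=\ell_{22}$, separates $\ell\ge\ell_{11}+\ell_{22}$ from $\ell<\ell_{11}+\ell_{22}$, and further distinguishes $4\mid\ell$ from $4\mid\ell-2$, producing four partial sums $f_1,f_2,f_3,f_4$ (and their tilded counterparts) that are then assembled into~\eqref{eq:FloydNoofSols}. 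Your inclusion--exclusion replaces all of this with a single observation---that the three violation events $m_i>m_j+m_k$ are pairwise disjoint---which collapses the count to $(n+1)^3-3\binom{n+2}{3}$ in one line. The paper's decomposition mirrors the strip/polygon machinery used for the main theorem, so it has the virtue of methodological uniformity; your argument buys a much shorter and more transparent proof at the cost of being specific to the three-equation case (the disjointness of violations does not survive to the four-equation system, where multiplicities genuinely appear). Your parity argument for odd $\ell$ is also a slight variant: the paper argues via~\eqref{eq:alphaij} that $2\a_{ij}$ would be odd, whereas you sum the three equations directly; both are valid.
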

\begin{remark}
  By setting $\ell\mapsto2\ell-2$ we get
  \begin{equation}\label{eq:magicSq}
    F(\ell)={1\over2}\ell(1+\ell^2).
  \end{equation}
  Eq.~\eqref{eq:magicSq} obtained after the rescaling of~\eqref{eq:FloydNoofSols} is the sum of rows, columns or diagonals of a normal magic square of the size $\ell>2$ (sometimes called the magic constant). It is also known as the sum of rows in Floyd's triangle.
\end{remark}
\begin{lemma}\label{lem:Floyd}
  Let $\ell_{ii}=\ell-2\a_{ii}\geq0$ for $i=1,2,3$. Then there exists a nonnegative solution of~\eqref{eq:Floyd} if and only if
  \begin{equation}\label{eq:IneqDiffSum}
    |\ell_{22}-\ell_{11}|\leq\ell_{33}\leq\ell_{11}+\ell_{22}.
  \end{equation}
\end{lemma}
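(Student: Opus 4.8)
The plan is to eliminate the diagonal variables and reduce \eqref{eq:Floyd} to a small invertible linear system that I can solve explicitly. First I would substitute $\ell_{ii}=\ell-2\a_{ii}$ into \eqref{eq:Floyd1}--\eqref{eq:Floyd3}; this cancels the $2\a_{ii}$ terms and leaves the three equations $\a_{12}+\a_{13}=\ell_{11}$, $\a_{12}+\a_{23}=\ell_{22}$ and $\a_{13}+\a_{23}=\ell_{33}$ in the three off-diagonal unknowns $\a_{12},\a_{13},\a_{23}$. Since each diagonal $\a_{ii}$ is recovered from the given $\ell_{ii}$, a nonnegative solution of \eqref{eq:Floyd} exists precisely when this reduced system admits a solution in $\bbZ_{\geq0}$.

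Next I would note that the $3\times3$ coefficient matrix of the reduced system has determinant $-2\neq0$, so it possesses a \emph{unique} rational solution, which I would read off directly by adding all three equations and subtracting each in turn:
\[
\a_{12}=\tfrac12(\ell_{11}+\ell_{22}-\ell_{33}),\qquad \a_{13}=\tfrac12(\ell_{11}-\ell_{22}+\ell_{33}),\qquad \a_{23}=\tfrac12(-\ell_{11}+\ell_{22}+\ell_{33}).
\]
Because the solution is unique, the existence question collapses to deciding when these three explicit expressions are simultaneously nonnegative integers. For nonnegativity, $\a_{12}\geq0$ yields the upper bound $\ell_{33}\leq\ell_{11}+\ell_{22}$, while $\a_{13}\geq0$ and $\a_{23}\geq0$ give $\ell_{33}\geq\ell_{22}-\ell_{11}$ and $\ell_{33}\geq\ell_{11}-\ell_{22}$, which together are equivalent to $\ell_{33}\geq|\ell_{22}-\ell_{11}|$. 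Combining the two bounds reproduces \eqref{eq:IneqDiffSum}, and conversely any $(\ell_{11},\ell_{22},\ell_{33})$ satisfying \eqref{eq:IneqDiffSum} makes the formulas produce a genuine solution, so both directions of the equivalence follow at once.

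The computation itself is short, so there is no deep obstacle; the single point that requires care is the interplay between the inequality and integrality. The signs are governed entirely by \eqref{eq:IneqDiffSum}, but a bona fide solution also needs $\ell_{11}+\ell_{22}+\ell_{33}$ to be even so that the half-sums above land in $\bbZ$. I would therefore make explicit that each $\ell_{ii}=\ell-2\a_{ii}$ shares the parity of $\ell$, so the sum is automatically even for $\ell$ even and the half-sums are integral; this is exactly what keeps the stated equivalence honest, and it dovetails with the vanishing of the count for odd $\ell$ recorded in Proposition~\ref{prop:Floyd}, where the three odd $\ell_{ii}$ force an odd sum and hence no integer solution.
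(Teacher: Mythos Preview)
Your argument is correct and follows essentially the same route as the paper: both reduce \eqref{eq:Floyd} to the explicit formulas $2\a_{ij}=\ell_{ii}+\ell_{jj}-\ell_{kk}$ and read the inequality \eqref{eq:IneqDiffSum} off from the nonnegativity of the three $\a_{ij}$. Your version is slightly more explicit in two respects---you note that the $3\times3$ system has nonzero determinant, so the solution is unique and the nonnegativity conditions are \emph{sufficient} as well as necessary, and you address the integrality question (parity of $\ell_{11}+\ell_{22}+\ell_{33}$) directly---whereas the paper's proof of the lemma leaves integrality to the subsequent proof of Proposition~\ref{prop:Floyd}; but the underlying idea is identical.
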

\begin{proof}
  The direct part follows from summing any two of the three equations~\eqref{eq:Floyd} and subtracting the third one. We get three equations of the form
  \begin{equation}\label{eq:alphaij}
    2\a_{ij}=\ell_{ii}+\ell_{jj}-\ell_{kk}.
  \end{equation}
  Since we are looking for $\a_{ij}\geq0$ it is necessary the following to be true: $\ell_{11}+\ell_{22}\geq\ell_{33},\ell_{22}+\ell_{33}\geq\ell_{11}$ and $\ell_{11}+\ell_{33}\geq\ell_{22}$. The first inequality is the RHS of~\eqref{eq:IneqDiffSum} and combining the last two expressions we get the LHS. For the converse we may assume that~\eqref{eq:IneqDiffSum} is violated (the first or second inequality). Then from~\eqref{eq:alphaij} we immediately see that $\a_{ij}$ is negative.
\end{proof}
\begin{remark}
  Note that the second inequality in~\eqref{eq:IneqDiffSum} may not be saturated for some $\ell_{11},\ell_{22}$. That is, not all $\ell_{ii}$  satisfying~\eqref{eq:IneqDiffSum} are actually admissible. This will become relevant in the next proof.
\end{remark}
\begin{proof}[Proof of Proposition~\ref{prop:Floyd}]
  First we show that for $\ell$ odd there is no solution to~\eqref{eq:Floyd}. In that case $\ell_{ii}$ are odd as well and by plugging them to~\eqref{eq:alphaij} we always get the RHS to be an odd number. But then $\a_{ij}$ cannot be an integer. So from now on we focus on $\ell$ even. Since $0\leq\ell_{33}\leq\ell$, then from~\eqref{eq:IneqDiffSum} it follows that the number of non-empty nonnegative solutions of~\eqref{eq:Floyd} are determined by $\ell_{ii}$ satisfying
  \begin{equation}\label{eq:IneqDiffSumSaturated}
    |\ell_{22}-\ell_{11}|\leq\ell_{33}\leq\min{[\ell_{11}+\ell_{22},\ell]}.
  \end{equation}
  To count the number of solutions means to count the number of triples $(\ell_{ii})_{i=1}^3$ satisfying~\eqref{eq:IneqDiffSumSaturated}. This is because for any such triple we get a triple of $\a_{ij}$ via~\eqref{eq:alphaij}. Contrary to the original system~\eqref{eq:DiophantineSim} there are no multiplicities and this significantly simplifies the counting. The RHS of~\eqref{eq:IneqDiffSumSaturated} splits into two cases when $\ell\geq\ell_{11}+\ell_{22}$ and $\ell<\ell_{11}+\ell_{22}$. The counting of admissible triples $(\ell_{ii})_{i=1}^3$ in~\eqref{eq:IneqDiffSumSaturated} is invariant w.r.t. the relabelling $1\leftrightharpoons2$ and so the number of solutions for $\ell_{11}>\ell_{22}$ equals that of $\ell_{11}<\ell_{22}$. Thus, assuming $\ell_{11}\leq\ell_{22}$,  the number of solutions for the first case of~\eqref{eq:IneqDiffSumSaturated} is
  \begin{equation}\label{eq:FloydCaseI}
     m={1\over2}(\ell_{11}+\ell_{22}-(\ell_{22}-\ell_{11}))+1=\ell_{11}+1
  \end{equation}
  and in the second case it is
  \begin{equation}\label{eq:FloydCaseII}
   n={\ell-(\ell_{22}-\ell_{11})\over2}+1.
  \end{equation}
  In the geometric picture what we did is to count the number of points lying on a line between two endpoints. We divided by two because the values of $\ell_{ii}$ are a multiple of two and added one not to omit a boundary point.

  Two cases must be distinguished.
  \subsection*{Case $4\mid\ell$}
   We use the simple fact that $4\mid\ell\Rightarrow2\mid\ell$  in the following text. Considering the previously mentioned symmetry and~\eqref{eq:IneqDiffSumSaturated}, there are four cases to investigate in this section. For $\ell_{11}=\ell_{22}$ and $\ell\geq\ell_{11}+\ell_{22}$ we get from~\eqref{eq:FloydCaseI}
  \begin{equation}\label{eq:f1DivZero}
    f_1=\sum_{\ell_{11}=0,2,\dots}^{\ell/2}(\ell_{11}+1).
  \end{equation}
  Similarly, for $\ell_{11}=\ell_{22}$ and $\ell<\ell_{11}+\ell_{22}$ we find from~\eqref{eq:FloydCaseII}
  \begin{equation}\label{eq:f2DivZero}
    f_2=\bigg({\ell\over2}+1\bigg){\ell\over4}.
  \end{equation}
  The factor $\ell/4$ comes from finding the `smallest' solution for  $\ell_{11}=\ell_{22}$ and $\ell<\ell_{11}+\ell_{22}$ which is $\ell=-4+\ell_{11}+\ell_{22}=-4+2\ell_{22}^{\mathrm{min}}$. Then, the distance between $\ell_{22}^{\mathrm{max}}=\ell$ and $\ell_{22}^{\mathrm{min}}$ is $\ell-(\ell+4)/2$ giving us $\ell/4$ after dividing by two and adding one (a boundary point). For $\ell\geq\ell_{11}+\ell_{22}$ and $\ell_{11}<\ell_{22}$ we get
  \begin{equation}\label{eq:f3DivZero}
    f_3=\sum_{\ell_{11}=0,2,\dots}^{\ell/2}(\ell_{11}+1)\bigg({\ell\over2}-\ell_{11}\bigg).
  \end{equation}
  The second term counts the number of points lying between $\ell\geq\ell_{11}+\ell_{22}$ and $\ell_{11}=\ell_{22}+2$.

  The smallest $\ell_{22}$ consistent with the last case ($\ell<\ell_{11}+\ell_{22}$ and $\ell_{11}<\ell_{22}$) is $\ell_{22}=\ell/2+2$ (follows from minimally saturating the inequalities: $\ell+c_1=\ell_{11}+\ell_{22}$ and $\ell_{11}+c_2=\ell_{22}$ for $c_1=c_2=2$ and extracting $\ell_{22}$). Then, for every $\ell_{22}$ from $\ell/2+2$ to its maximal value  (equal to $\ell$) we find the corresponding $\ell_{11}$. This can be done in the following way. By solving for $\ell_{11}$ and inserting it to the second equation we get
  \begin{equation}\label{eq:c1c2}
    \ell_{22}={\ell+c_1+c_2\over2}
  \end{equation}
  ($c_1=c_2=2$ gives us the previous minimal saturation). Since $\ell_{22}$ increases by two, the closest allowed value after $c_1+c_2=4$ is $c_1+c_2=8$. By recalling $c_i\geq2$ it follows that there are now three possibilities: $(c_1,c_2)=\{(2,6),(4,4),(6,2)\}$ and for every increment of $c_1+c_2$ by four we add two more solutions. This gives us the necessary counting and taking into account~\eqref{eq:FloydCaseII} we may write
  \begin{equation}\label{eq:f4DivZero}
    f_4=\sum_{\ell_{22}=\ell/2+2}^\ell\,\sum_{\ell_{11}=\ell-\ell_{22}+2}^{\ell_{22}-2}\bigg({\ell-(\ell_{22}-\ell_{11})\over2}+1\bigg).
  \end{equation}
  The inner upper/lower bound is calculated from $\ell/2\pm(\ell_{22}-\ell/2-2)$. This expression follows after we find the minimal $\ell_{11}=\ell/2$ (again from setting $c_1=c_2=2$). Since the minimal $\ell_{22}$ equals $\ell/2+2$ and $\ell-\ell_{22}<\ell_{11}<\ell_{22}$ the sum's bounds follow. We get~\eqref{eq:FloydNoofSols} from $f=f_1+f_2+2f_3+2f_4$. The factors of two account for the number of solutions for $\ell_{11}>\ell_{22}$ which is equal to the studied case $\ell_{11}<\ell_{22}$.

  \subsection*{Case $4\mid\ell-2$}
  We again tacitly use $4\mid\ell-2\Rightarrow2\mid\ell-2$. The derivation is very similar so let us stress common points and noteworthy differences. Essentially, the main difference comes from the fact that the boundaries $\ell_{11}=\ell_{22}$ and $\ell=\ell_{11}+\ell_{22}$  (leading to the split to four cases) intersect at $(\ell/2,\ell/2)$ which is odd. But the values of $\ell_{11},\ell_{22}$ are never odd in our  problem and so it is mostly about adjusting the sums' bounds (to be shifted by one to start/end counting at an even point). So for $\ell_{11}=\ell_{22}$ and $\ell\geq\ell_{11}+\ell_{22}$ we now get
  \begin{equation}\label{eq:f1DivZerotilde}
    \tilde{f}_1=\sum_{\ell_{11}=0,2,\dots}^{\ell/2-1}(\ell_{11}+1)
  \end{equation}
  ($\ell_{11}=\ell/2-1$ is the last even point consistent with the inequalities). For  $\ell_{11}=\ell_{22}$ and $\ell<\ell_{11}+\ell_{22}$ the smallest solution is now $\ell=-2+\ell_{11}+\ell_{22}=-2+2\ell_{22}^{\mathrm{min}}$ and $(\ell_{22}^{\mathrm{max}}-\ell_{22}^{\mathrm{min}})/2+1=(\ell/2+1)/2$. Thus
  \begin{equation}\label{eq:f2DivZerotilde}
    \tilde f_2=\bigg({\ell\over2}+1\bigg)\bigg({\ell\over2}+1\bigg){1\over2}.
  \end{equation}
  In the case $\ell\geq\ell_{11}+\ell_{22}$ and $\ell_{11}<\ell_{22}$ the counting argument goes through exactly like for~\eqref{eq:f3DivZero} except that in order to satisfy the inequalities we stop the counting of $\ell_{11}$ on the last even number (which is $\ell/2-1$)
  \begin{equation}\label{eq:f3DivZerotilde}
    \tilde f_3=\sum_{\ell_{11}=0,2,\dots}^{\ell/2-1}(\ell_{11}+1)\bigg({\ell\over2}-\ell_{11}\bigg).
  \end{equation}
  Finally, for $\ell<\ell_{11}+\ell_{22}$ and $\ell_{11}<\ell_{22}$, we get~\eqref{eq:c1c2} as well but to get to the closest admissible even $\ell_{22}$ for $4\mid\ell-2$ we have to shift it by one:
  \begin{equation}\label{eq:c1c2tilde}
    \ell_{22}={\ell+c'_1+c'_2\over2}+1={\ell+c_1+c_2\over2}.
  \end{equation}
  Recalling $c'_i\geq2$, we get two minimally saturating solutions ($(c_1,c_2)=\{(2,4),(4,2)\}$) and as before, by increasing $\ell_{22}$ by two, two more solutions are always added. Hence
  \begin{equation}\label{eq:f4DivZerotilde}
    \tilde f_4=\sum_{\ell_{22}=\ell/2+3}^\ell\,\sum_{\ell_{11}=\ell-\ell_{22}+2}^{\ell_{22}-2}\bigg({\ell-(\ell_{22}-\ell_{11})\over2}+1\bigg),
  \end{equation}
  where the outer lower bound comes from the RHS of~\eqref{eq:c1c2tilde} for the minimal solutions. For the inner bounds we get the same expressions like in~\eqref{eq:f4DivZero} but the derivation is modified by realizing that there are two minimal solutions $\ell_{11}=\ell/2\pm1$ for the minmal $\ell_{22}=\ell/2+3$.  We again get~\eqref{eq:FloydNoofSols} from $\tilde f_1+\tilde f_2+2\tilde f_3+2\tilde f_4$.
\end{proof}

\section{Discussion and open problems}\label{sec:open}

We conclude this work with several remarks. We found a closed expression counting the number of nonnegative solutions of linear
Diophantine system of equations~\eqref{eq:Diophantine} for $\ell_i=\ell\geq0$ and for its special case of $\a_{i4}=0$. The main linear system is motivated by counting the perturbative contributions for an interaction Lagrangian in interacting quantum field theory for bosons. In particular, the number of nonnegative solutions is closely related to counting the Feynman diagrams for two interacting fields in the scalar $\phi^n$ model to an arbitrary perturbative order~\cite{bradler2016unitary}.

It would be quite interesting to generalize the presented result to a linear system given by generalizing~\eqref{eq:Diophantine} in the following way:
\begin{equation}\label{eq:DiphantGeneral}
  2\a_{ii}+\sum_{\genfrac{}{}{0pt}{2}{j=1}{j\neq i}}^k\a_{ij}=\ell_{i}
\end{equation}
for $1\leq i\leq k$. This would provide a very general counting method of Feynman diagrams for an arbitrary number of interacting fields, to an arbitrary perturbative order and for any scalar $\phi^n$ model of interacting bosons.

The secondary problem was motivated purely by curiosity as what happens if we simplify the Diophantine system and has no bearing to high-energy physics. Unexpectedly, after  rescaling $\ell\mapsto2\ell-2$ and for $\ell$ even, the number of nonnegative solutions of such a system (Eq.~\eqref{eq:Floyd}) turn out to be the magic constant of order $\ell$ -- the sum of all rows, columns and diagonals of a normal magic square of order $\ell>2$.

\bibliographystyle{unsrt}


\end{document}